\newcounter{theorem_c}
\theoremstyle{plain}
\newtheorem{theorem}[theorem_c]{Theorem}
\newtheorem{proposition}[theorem_c]{Proposition}
\newtheorem*{theorem*}{Theorem}
\newtheorem*{lemma*}{Lemma}
\newtheorem*{corollary*}{Corollary}
\newtheorem*{proposition*}{Proposition}
\newtheorem{definition}[theorem_c]{Definition}
\newtheorem{remark}[theorem_c]{Remark}
\newtheorem*{definition*}{Definition}
\newtheorem*{example*}{Example}
\newtheorem*{remark*}{Remark}
\newcommand{\suchthat}[2]{\left\{\,#1\,\middle|\,#2\,\right\}}
\newcommand{\Forall}[2]{\forall\,#1.\;#2}
\newcommand{\powerset}[1]{\mathcal{P}(#1)}
\newcommand{\integers}{\mathbb{Z}}
\newcommand{\modclass}[2]{#1 \; (\text{mod } #2)}
\newcommand{\reals}{\mathbb{R}}
\newcommand{\nonstd}[1]{\,^\star #1}
\newcommand{\starIntegers}{\nonstd{\integers}} % Set of non-standard integers
\newcommand{\id}[1]{{id}_{#1}}
\newcommand{\obj}[1]{\operatorname{obj}\left(#1\right)}
\newcommand{\ket}[1]{\vert #1 \rangle}
\newcommand{\Hom}[3]{\operatorname{Hom}_{\,#1}\left[#2,#3\right]}
\newcommand{\Automs}[2]{\operatorname{Aut}_{\,#1}\left[#2\right]}
\newcommand{\setCat}{\operatorname{Set}}
\newcommand{\posetCat}{\operatorname{Pos}}
\newcommand{\fhilbCat}{\operatorname{fHilb}}
\newcommand{\hilbCat}{\operatorname{Hilb}}
\newcommand{\cpmCat}[1]{\operatorname{CPM}\left[#1\right]}
\newcommand{\cpstarCat}[1]{\operatorname{CP*}\left[#1\right]}
\newcommand{\splitCat}[1]{\operatorname{Split}\left[#1\right]}
\newcommand{\starHilbCat}{^\star\!\hilbCat}
\tikzstyle{ground_r}=[circuit ee IEC,thick,ground,rotate=90]
\newcommand{\traceSym}{
    \begin{tikzpicture}
        \path[use as bounding box] (-0.1, -0.1) rectangle (0.15, 0.1);
        \node (0) at (0, -0.1) {};
        \node (1) at (0, 0.1) [ground_r, scale=0.5] {};
        \draw[-] (0.center) to (1);
    \end{tikzpicture}
}
\newcommand{\trace}[1]{\traceSym_{#1}}
\newcommand{\future}[1]{J^{+}\left(#1\right)}
\newcommand{\past}[1]{J^{-}\left(#1\right)}
\newcommand{\futuredom}[1]{D^{+}\left(#1\right)}
\newcommand{\pastdom}[1]{D^{-}\left(#1\right)}
\newcommand{\sliceCat}[1]{\operatorname{Slices}\left(#1\right)}
\newcommand{\cauchySliceCat}[1]{\operatorname{CauchySlices}\left(#1\right)}
\newcommand{\slicePreceq}{\twoheadrightarrow}
\newcommand{\causOrdCat}{\operatorname{CausOrd}}
\newcommand{\boundedRegionCat}[1]{\operatorname{Regions}_{\text{bnd}}\left(#1\right)}
\newcommand{\regionCat}[1]{\operatorname{Regions}\left(#1\right)}
\newcommand{\states}[2]{\operatorname{States}_{#1}\left(#2\right)}
\newcommand{\statesPresheaf}[1]{\operatorname{States}_{#1}}
\newcommand{\fcstaralgCat}{\operatorname{fC*}\!\operatorname{alg}}
\newcommand{\cstaralgCat}{\operatorname{C*}\!\operatorname{alg}}
\newcommand{\wstaralgCat}{\operatorname{W*}\!\operatorname{alg}}
\newcommand{\vnalgCat}{\operatorname{vNA}}
\begin{document}

\title{Functorial evolution of quantum fields}
\author{
    Stefano Gogioso,
    Maria E. Stasinou
    and Bob Coecke\\
    {\small Department of Computer Science, University of Oxford, Oxford, UK}\\
    {\small \texttt{\{stefano.gogioso, maria.stasinou, bob.coecke\}(at)cs.ox.ac.uk}}
}
\date{}

\maketitle

\begin{abstract}
    \noindent
    We present a compositional algebraic framework to describe the evolution of quantum fields in discretised spacetimes.
    We show how familiar notions from Relativity and quantum causality can be recovered in a purely order-theoretic way from the causal order of events in spacetime, with no direct mention of analysis or topology.
    We formulate theory-independent notions of fields over causal orders in a compositional, functorial way.
    We draw a strong connection to Algebraic Quantum Field Theory (AQFT), using a sheaf-theoretical approach in our definition of spaces of states over regions of spacetime.
    We introduce notions of symmetry and cellular automata, which we show to subsume existing definitions of Quantum Cellular Automata (QCA) from previous literature.
    Given the extreme flexibility of our constructions, we propose that our framework be used as the starting point for new developments in AQFT, QCA and more generally Quantum Field Theory.
\end{abstract}

\section{Introduction}

Like much of classical physics, the study of Relativity and quantum field theory has deep roots in topology and geometry. However, recent years have seen a steady shift from the traditional approaches to more a more abstract algebraic perspective, based on the identification of spacetime structure with \emph{causal order}.

This new way of looking at causality finds its origin in a much-celebrated result by Malament \cite{malament1977class}, itself based on previous work by Kronheimer, Penrose, Hawking, King and McCarthy \cite{kronheimer1967structure,hawking1976new}.
If $M$ is a Lorentzian manifold, we say that $M$ is \emph{future- (resp. past-) distinguishing} iff two events $x, y \in M$ (i.e. two spacetime points) having the same exact causal future (resp. past) are necessarily identical
\footnote{The requirement for a manifold $M$ to be future- and past- distinguishing is essentially one of well-behaviour, e.g. excluding causal violations such as closed timelike curves (all points of which necessarily have the same causal past and future).}.
Given a Lorentzian manifold $M$, we can define a partial order $\leq_M$ between its events---the \emph{causal order}---by setting $x \leq_M y$ iff $x$ \emph{causally precedes} $y$ in $M$, i.e. iff there exists a future-directed causal curve---a smooth curve in $M$ with everywhere future-directed time-like or light-like tangent vector---from $x$ to $y$.
The 1977 result by Malament \cite{malament1977class} can then be stated as follows.

\begin{theorem}\label{theorem:malament1977}
    Let $M$ and $M'$ be two Lorentzian manifolds, both manifolds being future-and-past--distinguishing. The associated causal orders $(M, \leq_M)$ and $(M, \leq_{M'})$ are order-isomorphic if and only if $M$ and $M'$ are conformally equivalent.
\end{theorem}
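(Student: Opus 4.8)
The statement splits into two implications, of which one is routine and one carries essentially all of the content. The plan for the easy direction is as follows. Suppose $f \colon M \to M'$ is a conformal diffeomorphism, i.e. $f^{*}g' = \Omega^{2} g$ for some strictly positive smooth function $\Omega$. Then $f$ preserves the sign of $g(v,v)$ for every tangent vector $v$, hence carries timelike, null and spacelike vectors to vectors of the same type; assuming as usual that $f$ also preserves time-orientation, it therefore maps future-directed causal curves to future-directed causal curves, and likewise for $f^{-1}$. Consequently $x \leq_M y \iff f(x) \leq_{M'} f(y)$, so $f$ is an order-isomorphism $(M,\leq_M) \cong (M',\leq_{M'})$.

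For the converse, let $\phi \colon (M,\leq_M) \to (M',\leq_{M'})$ be an order-isomorphism. The plan is to promote $\phi$, in stages, from a purely order-theoretic bijection to a smooth conformal isometry. First I would recover the \emph{chronological} relation $\ll$ from the causal relation $\leq$. This is the step at which the future-and-past--distinguishing hypothesis is indispensable: it is exactly this hypothesis that prevents $\leq$ from discarding the information encoded in $\ll$, by guaranteeing that distinct events never share both their causal past and their causal future (ruling out, for instance, closed causal curves). Once $\ll$ is characterised in terms of $\leq$ alone, the order-isomorphism $\phi$ automatically preserves $\ll$ as well, so it is simultaneously a causal and a chronological bijection.

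The remaining and most substantial step is to show that a chronological bijection between future-and-past--distinguishing Lorentzian manifolds is necessarily a smooth conformal isometry; here I would invoke the Hawking--King--McCarthy theorem. Its proof recovers successively richer structure from $\ll$ alone: the manifold topology (via the open sets $\operatorname{int}\future{x}$ and $\operatorname{int}\past{x}$, which makes $\phi$ a homeomorphism), then the null geodesics as the loci where causal and chronological relatedness part company (the horismos $x \leq y$ but not $x \ll y$), then the family of light cones at each event, and finally the metric up to a positive conformal factor. Since each of these is definable from $\ll$, the map $\phi$ must preserve all of them, and one reads off that $\phi$ is a conformal diffeomorphism $M \cong M'$.

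I expect the two genuinely hard points to be the first stage of the converse and the smoothness buried in the last. The difficulty in reconstructing $\ll$ from $\leq$ is that a null-related pair looks order-theoretically much like a chronologically-related one, and only the distinguishing hypothesis closes this gap --- this is precisely Malament's contribution over the earlier chronological versions of the result. The difficulty in the final stage is the familiar one of upgrading a map that merely preserves a combinatorial--topological relation into a map respecting the \emph{differentiable} structure, which must itself be extracted from the global pattern of null geodesics; this is where the bulk of the analytic work in Hawking--King--McCarthy lies.
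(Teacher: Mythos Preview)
The paper does not prove this theorem at all: it is quoted as an external result, attributed to Malament \cite{malament1977class} building on Hawking--King--McCarthy \cite{hawking1976new} and Kronheimer--Penrose \cite{kronheimer1967structure}, and is used only as motivation for treating causal orders as stand-ins for spacetimes. There is therefore no in-paper proof to compare your proposal against.

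That said, your sketch is a faithful outline of the argument as it appears in the literature: the easy direction via conformal invariance of causal character, and the hard direction by first recovering the chronological relation $\ll$ from $\leq$ (Malament's specific contribution, relying on the distinguishing hypothesis) and then invoking the Hawking--King--McCarthy machinery to upgrade a chronological bijection to a conformal diffeomorphism. One caveat worth flagging if you intend to flesh this out: the Hawking--King--McCarthy step carries a hidden dimension hypothesis (the manifolds must have dimension greater than two), and the smoothness upgrade in that step is genuinely delicate---your proposal correctly identifies this as the analytically hard part but does not yet indicate how it is discharged.
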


\noindent While the result by Malament guarantees that future-and-past--distinguishing manifolds (up to conformal equivalence) can be identified with their causal orders, it does not provide a characterisation of which partial orders arise as causal orders on manifolds (or restrictions thereof to manifold subsets). This lack of exact correspondence between topology and order is the motivation behind many past and current lines of enquiry, such as the domain-theoretic work of \cite{martin2010domain,martin2012spacetime}. The complete characterisation of which partial orders arise as the causal orders of Lorentzian manifolds is still an open question.

A different approach to the order-theoretic study of spacetime is given by the \emph{causal sets} research programme (cf. \cite{bombelli1987space,bombelli2006overview}).
A \emph{causal set} is a poset which is \emph{locally finite}, i.e. such that for every $x, y \in C$ the subset $\suchthat{z \in C}{x \leq z \leq y}$ is finite.
\footnote{The local finiteness condition for a causal set can equivalently be stated as the requirement that the partial order arises as the reflexive-transitive closure of a non-transitive directed graph, its Hasse diagram (see e.g. \cite{pinzani2019categorical}).}
Causal sets arise as discrete subsets of Lorentzian manifolds (under the causal order inherited by restriction) and a fundamental pursuit for the community is a characterisation of the large-scale properties of spacetime as emergent from a discrete small-scale structure.
In particular, the question whether a causal set can always be (suitably) embedded as a discrete subset of a Lorentzian manifold is central to the programme and---as far as we are aware---one which is still to be completely answered \cite{bombelli2006overview}.

When it comes to incorporating quantum fields into the spacetimes, efforts have mostly been focused in three significant directions: algebraic approaches, topological approaches and quantum cellular automata.

The algebraic approaches take a functorial and sheaf-theoretic view of quantum fields, studying the \emph{local} structure of fields through the algebras of observables---usually C*-algebras or von Neumann algebras---over the regions of spacetime. Prominent examples include \emph{Algebraic Quantum Field Theory} (AQFT) \cite{haag1964algebraic,halvorson2006algebraic} and the topos-theoretic programmes \cite{heunen2009topos,doring2008thing}.

The topological approaches focus instead on global aspects of relativistic quantum fields, foregoing any possibility of studying local structure by requiring that field theories be \emph{topological}, i.e. invariant under large scale deformations of spacetime. The resulting \emph{Topological Quantum Field Theories} (TQFTs) \cite{lurie2009classification,atiyah1988topological,witten1988topological} have achieved enormous success in fields such as condensed matter theory and quantum error correction.
Like AQFT, TQFTs have a categorical formulation as functors from a category of spacetime ``pieces'' to categories of Vector spaces and algebras. The difference is in the \emph{nature} of those ``pieces'': in AQFT a spacetime is given and the order structure of its regions is considered; in TQFT, on the other hand, (equivalence classes of) basic topological manifolds are given, which can be combined together to form myriad different spacetimes.

The approaches based on Quantum Cellular Automata (QCA) \cite{dariano2016automata,arrighi2019automata,vonNeumann1966automata}, finally, attempt to tame the issues with the formulation of quantum field theory by positing that full-fledged quantum fields in spacetime can be understood as the continuous limit of much-more-manageable theories, dealing with quantum fields living on discrete lattices and subject to discrete time evolution (known as Quantum Cellular Automata).

In this work, we propose to use tools from category theory to unify key aspects of the approaches above under a single generalised framework. Specifically, our work is part of an effort to gain an operational, process-theoretic understanding of the relationship between quantum theory and Relativistic causality \cite{coecke2016terminality,coecke2013causal,kissinger2017categorical,pinzani2019categorical}.
Our key contribution, across the next four sections, will be the formulation of a functorial and theory-independent notion of field theory based solely on the order-theoretic structure of causality.
To exemplify the flexibility of our construction, in Section~\ref{section:connection-aqft} we will build a strong connection to Algebraic Quantum Field Theory, based on a sheaf-theoretic formulation of states over regions.
In Section~\ref{section:connection-qca}, finally, we will formulate a notion of cellular automaton which encompasses and greatly generalises notions of QCA from existing literature.

\section{Causal orders}
\label{section:causal-orders}

In this work, we will consider posets as an abstract model of causally well-behaved spacetimes.
This means that we will be working in the category $\posetCat$ of posets and monotone maps between them, with Malament's result \cite{malament1977class} showing that future-and-past--distinguishing conformal Lorentzian manifolds embed into $\posetCat$.
To highlight the intended relationship to spacetimes, we will refer to partial orders as \emph{causal orders} for the remainder of this work.

\begin{definition}\label{definition:causal-order}
    By a \emph{causal order} we mean a poset $\Omega = (|\Omega|, \leq)$, i.e. a set $|\Omega|$ equipped with a partial order $\leq$ on it.
    We refer to the elements of $\Omega$ as \emph{events}.
    Given two events $x, y \in \Omega$ we say that $x$ \emph{causally precedes} $y$ (equivalently that $y$ \emph{causally follows} $x$) iff $x \leq y$.
    We say that $x$ and $y$ are \emph{causally related} iff $x \leq y$ or $y \leq x$.
    A \emph{causal sub-order} $\Omega'$ of a causal order $\Omega$ is a subset $|\Omega'| \subseteq |\Omega|$ endowed with the structure of a poset by restriction.
    \footnote{
        I.e. such that for all $x, y \in |\Omega'|$ we have that $x \leq y$ in $\Omega'$ if and only if $x \leq y$ in $\Omega$.
    }
\end{definition}

\noindent As we now proceed to demonstrate, several familiar concepts from Relativity can be defined in a purely combinatorial manner on partial orders.

\subsection{Causal Paths}

\begin{definition}\label{definition:causal-paths}
    Let $\Omega$ be a causal order and let $x, y \in \Omega$ be two events. A \emph{causal path} from $x$ to $y$ is a maximal totally ordered subset $\gamma \subseteq \Omega$ such that $x = \min \gamma$ and $y = \max \gamma$. Maximality of the subset $\gamma \subseteq \Omega$ here means that there is no total order $\gamma' \subseteq \Omega$ strictly containing gamma and such that $x = \min \gamma'$ and $y = \max \gamma'$. We write $\gamma: x \rightsquigarrow y$ to denote that $\gamma$ is a causal path from $x$ to $y$.
\end{definition}

\noindent The causal diamond from $x$ to $y$ in a causal order $\Omega$ is the union of all causal paths $x \rightsquigarrow y$ in $\Omega$. Furthermore, causal paths in $\Omega$ can be naturally organised into a category as follows:

\begin{itemize}
    \item the objects are the events $x \in \Omega$;
    \item the morphisms from $x$ to $y$ are the paths $x \rightsquigarrow y$;
    \item the identity morphism on $x$ is the singleton path $\{x\}: x \rightsquigarrow x$;
    \item composition of two paths $\gamma: x \rightsquigarrow y$ and $\xi: y \rightsquigarrow z$ is the set-theoretic union of the subsets $\gamma, \xi \subseteq \Omega$:
    \begin{equation}
        \xi \circ \gamma := (\xi \cup \gamma): x \rightsquigarrow z
    \end{equation}
\end{itemize}

\begin{definition}\label{definition:causal-future-past}
    Let $\Omega$ be a causal order and let $x \in \Omega$ be an event. The \emph{causal future} $\future{x}$ of $x$ is the set of all events $y$ which causally follow it:
    \begin{equation}
        \future{x} := \suchthat{y \in \Omega}{x \leq y}
    \end{equation}
    Similarly, the \emph{causal past} $\past{x}$ of $x$ is the set of all events $y$ which causally precede it:
    \begin{equation}
        \past{x} := \suchthat{y \in \Omega}{y \leq x}
    \end{equation}
    We also define causal future and past for arbitrary subsets $A \subseteq \Omega$ by union:
    \begin{equation}
        \future{A} := \bigcup_{x \in A} \future{x}
        \hspace{2cm}
        \past{A} := \bigcup_{x \in A} \past{x}
    \end{equation}
\end{definition}

\begin{remark}
    A causal order $\Omega$ is automatically future-and-past--distinguishing. To see this, assume that $\future{x} = \future{y}$ for some $x, y \in \Omega$: then both $x \in \future{x} = \future{y}$, implying $y \leq x$, and $y \in \future{y} = \future{x}$, implying $x \leq y$, so that $x = y$ by antisymmetry of the partial order $\leq$. The assumption that $\past{x} = \past{y}$ analogously implies that $x = y$.
\end{remark}

\begin{definition}\label{definition:unbounded-causal-paths}
    Let $\Omega$ be a causal order and let $x \in \Omega$ be an event.
    By a causal path $\gamma: x \rightsquigarrow +\infty$ (resp. $\gamma: -\infty \rightsquigarrow x$) we denote a maximal totally ordered subset $\gamma \subseteq \Omega$ such that $x = \min \gamma$ (resp. $x = \max \gamma$).
    If $\Omega$ has a global maximum (resp. global minimum), then we denote it by $+\infty$ (resp. $-\infty$) for consistency with our previous definition of causal paths, otherwise the symbol $+\infty$ (resp. $-\infty$) is never used to denote an actual element of $C$.
\end{definition}

\noindent The causal future (resp. causal past) of an event $x$ is the union of all causal paths $x \rightsquigarrow +\infty$ (resp. $-\infty \rightsquigarrow x$).

\subsection{Space-like Slices}

\begin{definition}\label{definition:domain-of-dependence}
    Let $\Omega$ be a causal order and let $A \subseteq \Omega$ be any subset.
    The \emph{future domain of dependence} $\futuredom{A}$ of $A$ is the subset of all events $x \in \Omega$ which ``necessarily causally follow $A$'', in the sense that every causal path $-\infty \rightsquigarrow x$ intersects $A$:
    \begin{equation}
        \futuredom{A}
        :=
        \suchthat{x \in \Omega}{
            \Forall{\gamma: -\infty \rightsquigarrow x}{\gamma \cap A \neq \emptyset}
        }
    \end{equation}
    The \emph{past domain of dependence} $\pastdom{A}$ of $A$ is the subset of all events $x \in \Omega$ which ``necessarily causally precede $A$'', in the sense that every causal path $x \rightsquigarrow +\infty$ intersects $A$:
    \begin{equation}
        \pastdom{A}
        :=
        \suchthat{x \in \Omega}{
            \Forall{\gamma: x \rightsquigarrow +\infty}{\gamma \cap A \neq \emptyset}
        }
    \end{equation}
\end{definition}

\noindent The domains of dependence of a subset $A$ are related to its past and future by the following two Propositions.

\begin{proposition}\label{proposition:domain-of-dependence-past-future-1}
    Let $\Omega$ be a causal order and let $A \subseteq \Omega$ be any subset. Then $\futuredom{A} \subseteq \future{A}$ and $\pastdom{A} \subseteq \past{A}$.
\end{proposition}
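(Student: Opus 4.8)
The plan is to prove the two inclusions by essentially the same argument, treating only $\futuredom{A} \subseteq \future{A}$ in detail since $\pastdom{A} \subseteq \past{A}$ follows by the order-reversing symmetry between the two definitions. So fix an event $x \in \futuredom{A}$; I want to exhibit an event $a \in A$ with $a \leq x$, which by Definition~\ref{definition:causal-future-past} is exactly the assertion $x \in \future{A}$.

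The key observation is that any causal path $\gamma : -\infty \rightsquigarrow x$ consists entirely of events causally preceding $x$: indeed $x = \max \gamma$ forces $z \leq x$ for every $z \in \gamma$. Hence, as soon as I can produce even a single causal path $\gamma : -\infty \rightsquigarrow x$, the defining property of $\futuredom{A}$ guarantees that this path meets $A$ in some event $a \in \gamma \cap A$, and that $a$ then satisfies $a \leq x$ as required. So $x \in \future{a} \subseteq \future{A}$.

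Thus the only real content is the \emph{existence} of at least one causal path $-\infty \rightsquigarrow x$, and this is where I would invoke Zorn's lemma. Consider the collection $P$ of all chains $C \subseteq \past{x}$ with $x \in C$, ordered by inclusion. It is non-empty, since the singleton $\{x\}$ belongs to it, and the union of any inclusion-chain of members of $P$ is again a chain lying in $\past{x}$ and containing $x$, hence again a member of $P$. Zorn's lemma then yields a maximal element $\gamma \in P$. Because any totally ordered subset strictly containing $\gamma$ and still having maximum $x$ would again lie in $P$, this maximal $\gamma$ is precisely a maximal totally ordered subset with $x = \max \gamma$, i.e. a causal path $\gamma : -\infty \rightsquigarrow x$ in the sense of Definition~\ref{definition:unbounded-causal-paths}. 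Combining this with the previous paragraph completes the first inclusion.

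The main obstacle I anticipate is purely foundational: one must be willing to appeal to a maximal-chain principle to guarantee that causal paths to a given event exist at all, since the posets here are not assumed finite or even locally finite. Once that existence is granted, both inclusions are immediate. The past case runs identically, using chains in $\future{x}$ containing $x$ to produce a causal path $x \rightsquigarrow +\infty$ meeting $A$ at some $a$ with $x \leq a$, whence $x \in \past{a} \subseteq \past{A}$.
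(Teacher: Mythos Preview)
Your proof is correct and follows essentially the same approach as the paper: produce a causal path $-\infty \rightsquigarrow x$, intersect it with $A$ to obtain some $a \leq x$, and conclude $x \in \future{A}$. The only differences are that you supply an explicit Zorn's lemma argument for the existence of such a path (the paper simply asserts that one exists extending the singleton $\{x\}$), and your conclusion is slightly more direct than the paper's, which takes the unnecessary detour of constructing a sub-path $\gamma' = \gamma \cap \future{\{y\}}$ rather than immediately invoking $a \leq x \Rightarrow x \in \future{a}$.
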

\begin{proof}
    Let $x \in \futuredom{A}$ be any event in the future domain of dependence of $A$.
    The set of causal paths $-\infty \rightsquigarrow x$ is necessarily non-empty, because there must be at least one such path extending the singleton path $\{x\}: x \rightsquigarrow x$. Let $\gamma: -\infty \rightsquigarrow x$ be one such path.
    Because $x \in \futuredom{A}$, $\gamma$ must intersect $A$ at some point $y \leq x$, and we define $\gamma' := \gamma \cap \future{\{y\}} \neq \emptyset$.
    By definition, $y = \min \gamma'$.
    Because $\future{\{y\}}$ is upward-closed, $x = \max \gamma'$ and $\gamma': y \rightsquigarrow x$ is such that $\gamma' \subseteq \future{\{y\}} \subseteq \future{A}$, so we conclude that $x \in \future{A}$.
    The proof that $\pastdom{A} \subseteq \past{A}$ is analogous.
\hfill$\square$\end{proof}

\begin{proposition}\label{proposition:domain-of-dependence-past-future-2}
    Let $\Omega$ be a causal order and let $A \subseteq \Omega$ be any subset. If $B \subseteq \futuredom{A}$ then $\future{B} \subseteq \future{A}$ and $\past{B} \subseteq \past{A} \cup \future{A}$. Dually, if $B \subseteq \pastdom{A}$ then $\past{B} \subseteq \past{A}$ and $\future{B} \subseteq \past{A} \cup \future{A}$.
\end{proposition}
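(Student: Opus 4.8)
The plan is to prove the two inclusions under the hypothesis $B \subseteq \futuredom{A}$; the dual pair, under the hypothesis $B \subseteq \pastdom{A}$, then follows verbatim by the order-reversing symmetry of all the definitions involved, which swaps $\future{\cdot}$ with $\past{\cdot}$ and $\futuredom{\cdot}$ with $\pastdom{\cdot}$. For the first inclusion $\future{B} \subseteq \future{A}$, I would first observe that $\future{A}$ is upward-closed: if $a \leq y \leq y'$ with $a \in A$, then $a \leq y'$ and so $y' \in \future{a} \subseteq \future{A}$. By Proposition~\ref{proposition:domain-of-dependence-past-future-1} we have $B \subseteq \futuredom{A} \subseteq \future{A}$, so each $b \in B$ lies in $\future{A}$; upward-closure then immediately gives $\future{b} \subseteq \future{A}$, and taking the union over $b \in B$ yields $\future{B} \subseteq \future{A}$.

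The second inclusion $\past{B} \subseteq \past{A} \cup \future{A}$ is the heart of the argument, and it suffices to show $\past{b} \subseteq \past{A} \cup \future{A}$ for a fixed $b \in B$. So fix an event $z \leq b$. The key step is to produce a causal path $\gamma : -\infty \rightsquigarrow b$ with $z \in \gamma$. Concretely, I would consider the subposet $\past{b}$ and extend the chain $\{z, b\}$ (a chain since $z \leq b$) to a maximal totally ordered subset $\gamma$ of $\past{b}$, whose existence is guaranteed by the Hausdorff maximality principle. Every element of $\gamma$ is $\leq b$, so $\gamma$ has maximum $b$; moreover any totally ordered $\gamma' \supseteq \gamma$ in $\Omega$ with $\max \gamma' = b$ satisfies $\gamma' \subseteq \past{b}$ and hence $\gamma' = \gamma$ by maximality of $\gamma$ inside $\past{b}$. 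Thus $\gamma$ is genuinely a maximal totally ordered subset of $\Omega$ with $b = \max \gamma$, i.e. a causal path $\gamma : -\infty \rightsquigarrow b$ in the sense of Definition~\ref{definition:unbounded-causal-paths}, and it passes through $z$.

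To conclude, since $b \in \futuredom{A}$ the path $\gamma$ must intersect $A$, say at some $a \in \gamma \cap A$. Because $\gamma$ is totally ordered, $z$ and $a$ are comparable: if $z \leq a$ then $z \in \past{a} \subseteq \past{A}$, while if $a \leq z$ then $z \in \future{a} \subseteq \future{A}$. In either case $z \in \past{A} \cup \future{A}$, which establishes $\past{b} \subseteq \past{A} \cup \future{A}$ and hence, taking the union over $b \in B$, the desired $\past{B} \subseteq \past{A} \cup \future{A}$.

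I expect the main obstacle to be the construction of $\gamma$ through $z$ — specifically, verifying that a maximal chain of $\past{b}$ containing $z$ really does qualify as an unbounded causal path $-\infty \rightsquigarrow b$, rather than merely an abstract chain, and that it retains $b$ as its maximum so that the defining intersection property of $\futuredom{A}$ applies to it. Once this path is in hand, the remaining ingredients (upward-closure of $\future{A}$ and the comparability of $z$ with the intersection point on a totally ordered set) are routine, and the dual statement needs no separate argument.
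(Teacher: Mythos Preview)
Your proposal is correct and follows essentially the same route as the paper's own proof: both arguments establish $\future{B}\subseteq\future{A}$ via Proposition~\ref{proposition:domain-of-dependence-past-future-1} and upward-closure, then handle $\past{B}\subseteq\past{A}\cup\future{A}$ by running a causal path $-\infty\rightsquigarrow b$ through the chosen point, invoking $b\in\futuredom{A}$ to obtain an intersection with $A$, and using comparability along the chain. The only cosmetic difference is that the paper first fixes a bounded path $x\rightsquigarrow y$ and then extends it downward, whereas you build the unbounded path in one step via Hausdorff maximality inside $\past{b}$ and explicitly verify it meets Definition~\ref{definition:unbounded-causal-paths}; your version is slightly more self-contained on this point.
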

\begin{proof}
    Without loss of generality, assume $B \subseteq \futuredom{A}$---the case $B \subseteq \pastdom{A}$ is proven analogously.
    From Proposition~\ref{proposition:domain-of-dependence-past-future-1} we have that $B \subseteq \futuredom{A} \subseteq \future{A}$, so we conclude that $\future{B} \subseteq \future{A}$ by upward-closure of $\future{A}$.
    Now consider $x \in \past{B}$. Let $\gamma: x \rightsquigarrow y$ be any path with $y \in B$ and let $\gamma': -\infty \rightsquigarrow y$ be any path extending $\gamma$.
    Because $B \subseteq \futuredom{A}$, the intersection $\gamma' \cap A$ contains at least some point $z$.
    Because $\gamma'$ is totally ordered, we have two possible cases: $z \leq x$ and $z \geq x$.
    If $z \leq x$, then $\gamma' \cap \future{\{z\}} \cap \past{\{x\}}: z \rightsquigarrow x$ shows that $x \in \future{A}$.
    If $z \geq x$, then $\gamma' \cap \past{\{z\}} \cap \future{\{x\}}: x \rightsquigarrow z$ shows that $x \in \past{A}$.
\hfill$\square$\end{proof}

\begin{definition}\label{definition:spacelike-slices}
    Let $\Omega$ be a causal order.
    We say that two events $x, y$ are \emph{space-like separated} if they are not causally related, i.e. if neither $x \leq y$ nor $y \leq x$.
    Consequently, we define a \emph{(space-like) slice} $\Sigma$ in $\Omega$ to be an antichain, i.e. a subset $\Sigma \subseteq \Omega$ that $(\Sigma, \leq)$ is a discrete partial order (equivalently,  any two distinct $x, y \in \Sigma$ are space-like separated).
\end{definition}

\begin{figure}[h]
    \begin{center}
        \includegraphics[width=0.45\textwidth]{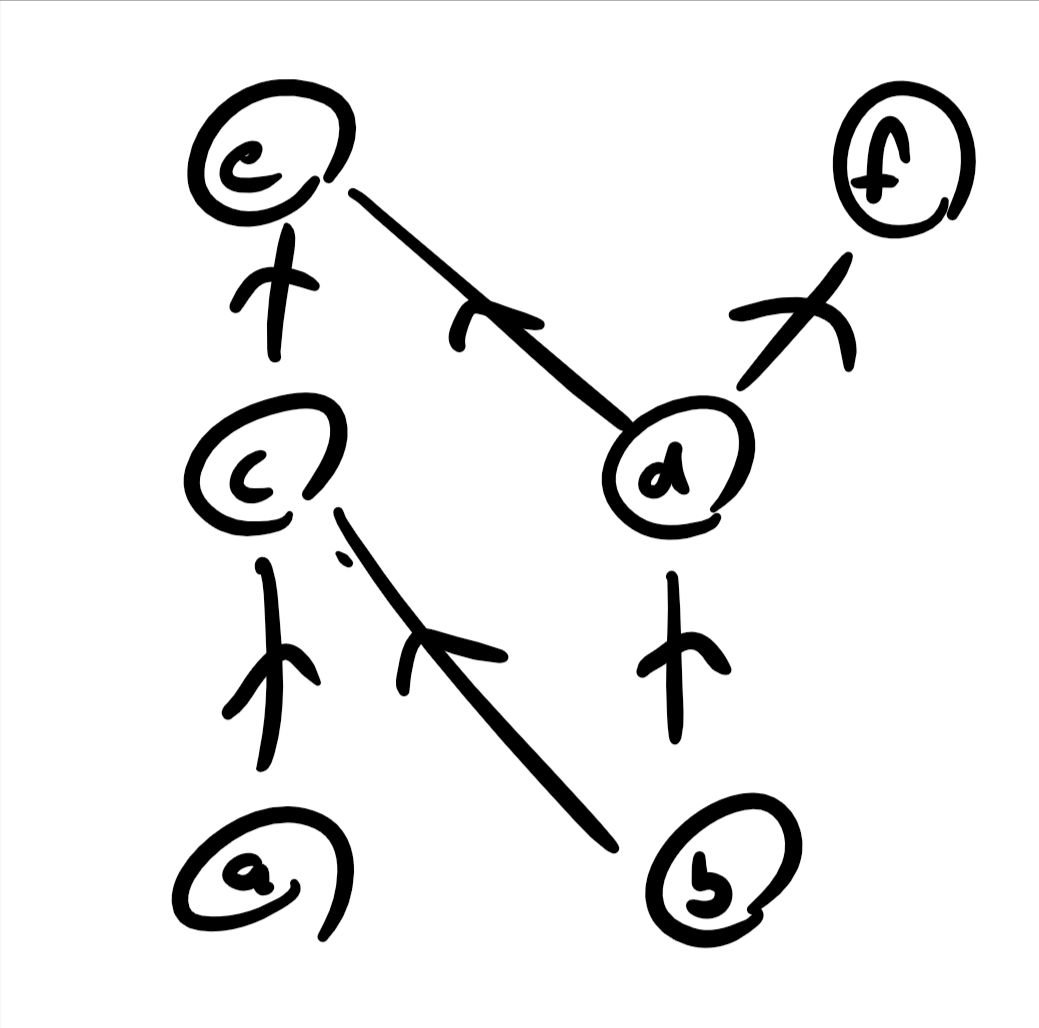}
        \hspace{5mm}
        \includegraphics[width=0.45\textwidth]{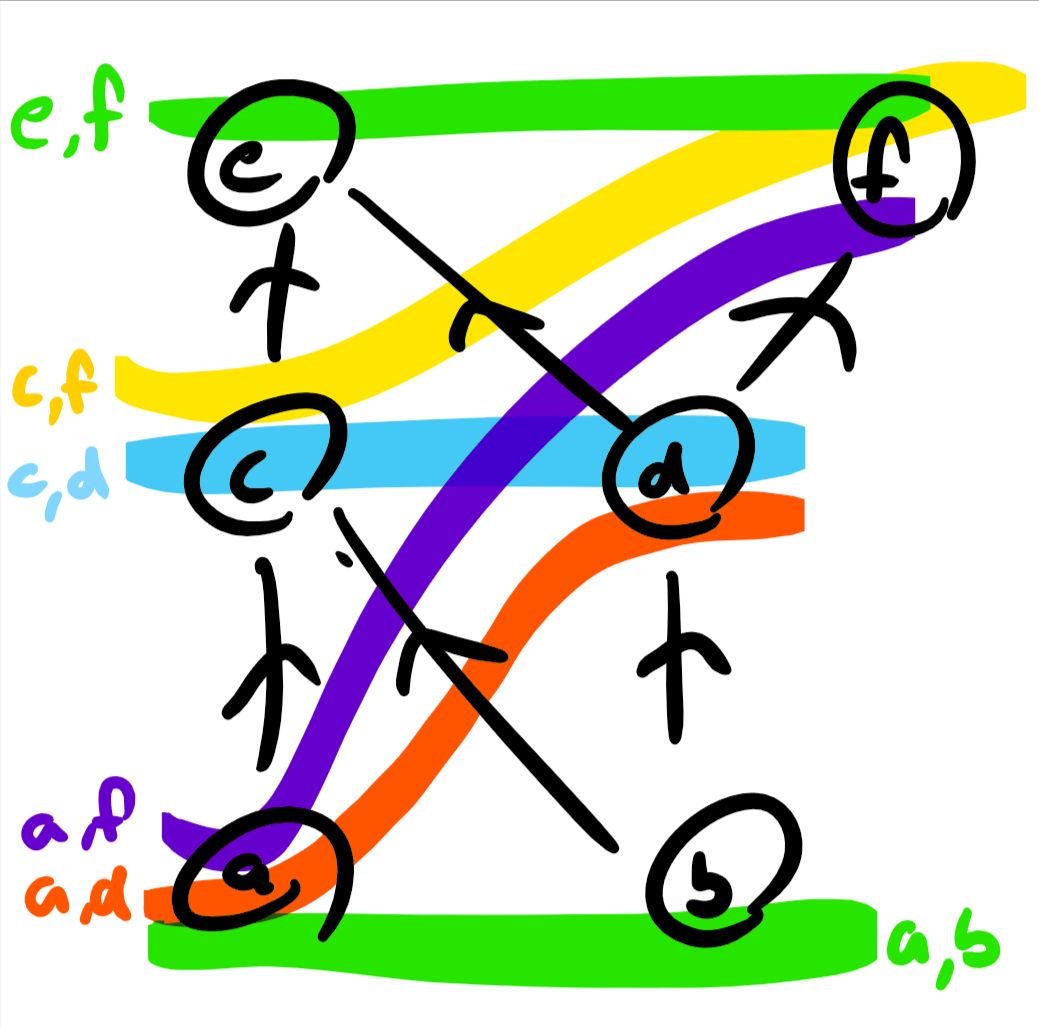}
    \end{center}
    \caption{Left: the Hasse diagram for a causal order on 6 events $\{a,b,c,d,e,f\}$. Right: the maximal slices for the causal order highlighted (all other slices can be obtained as subsets of the maximal slices).}
\end{figure}

\begin{definition}\label{definition:spacelike-separated}
    Let $\Omega$ be a causal order and let $\mathcal{A} \subseteq \powerset{\Omega}$ be a collection of subsets of $\Omega$. We say that the subsets in $\mathcal{A}$ are \emph{space-like separated} if the following conditions holds for all distinct $A, B\in \mathcal{A}$:
    \begin{equation}
        A \cap \left(\future{B} \cup \past{B}\right) = \emptyset
    \end{equation}
    In particular, a space-like slice is the union of a collection of space-like separated singleton subsets.
\end{definition}

\noindent More than diamonds or paths, slices are the focus of this work. Space-like slices are a generalisation of space-like surfaces from Relativity: the term ``slice'' is used here in place of ``surface'' because the latter traditionally implies some topological conditions.

\begin{definition}\label{definition:category-of-all-slices}
    Let $\Omega$ be a causal order. The \emph{category of all slices} on $\Omega$, denoted by $\sliceCat{\Omega}$, is the strict partially monoidal category \cite{gogioso2019process} defined as follows.
    \begin{itemize}
        \item Objects of $\sliceCat{\Omega}$ are the slices of $\Omega$.
        \item The category is a poset and the unique morphism from a space-like slice $\Sigma$ to another space-like slice $\Gamma$ is denoted $\Sigma \slicePreceq \Gamma$ if it exists.
        Specifically, we say that $\Sigma \slicePreceq \Gamma$ if and only if $\Gamma \subseteq \futuredom{\Sigma}$, i.e. iff $\Gamma$ lies entirely into the future domain of dependence of $\Sigma$.
        \item The monoidal product on objects $\Sigma \otimes \Gamma$ is only defined when $\Sigma$ and $\Gamma$ are space-like separated, in which case it is the disjoint union $\Sigma \sqcup \Gamma$.
        \item The unit for the monoidal product is the empty space-like slice $\emptyset \subseteq \Omega$.
        \item The partial monoidal product on objects extends to morphisms because whenever $\Sigma' \subseteq \futuredom{\Sigma}$ and $\Gamma' \subseteq \futuredom{\Gamma}$---i.e. whenever $\Sigma \slicePreceq \Sigma'$ and $\Gamma \slicePreceq \Gamma'$---we necessarily have:
        \begin{equation}
            \Sigma' \sqcup \Gamma'
            \subseteq
            \futuredom{\Sigma} \cup \futuredom{\Gamma}
            \subseteq
            \futuredom{\Sigma \sqcup \Gamma}
            \text{, i.e. }
            \Sigma \otimes \Gamma \slicePreceq \Sigma' \otimes \Gamma'
        \end{equation}
    \end{itemize}
    The partial monoidal product is strict, i.e. strictly associative and unital when all products are defined. The partial monoidal product is also commutative, i.e. it is symmetric (wherever defined) with an identity $\Sigma \otimes \Gamma = \Gamma \otimes \Sigma$ as the symmetry isomorphism.
\end{definition}

The order relation $\Sigma \slicePreceq \Gamma$ on slices has been defined in such a way as to ensure that the field state local to the the codomain slice $\Gamma$ will be entirely determined by evolution and marginalisation of the field state on the domain slice $\Sigma$. In particular, the definition is such that any sub-slice $\Sigma' \subseteq \Sigma$ necessarily satisfies $\Sigma \slicePreceq \Sigma'$, since the field state on $\Sigma'$ can be obtained from the field state on $\Sigma$ by marginalisation/discarding. The connection to marginalisation will be discussed in further detail in Subsection~\ref{subsection:causal-field-theories-no-signalling} below.

\begin{figure}[h]
    \begin{center}
        \includegraphics[width=0.45\textwidth]{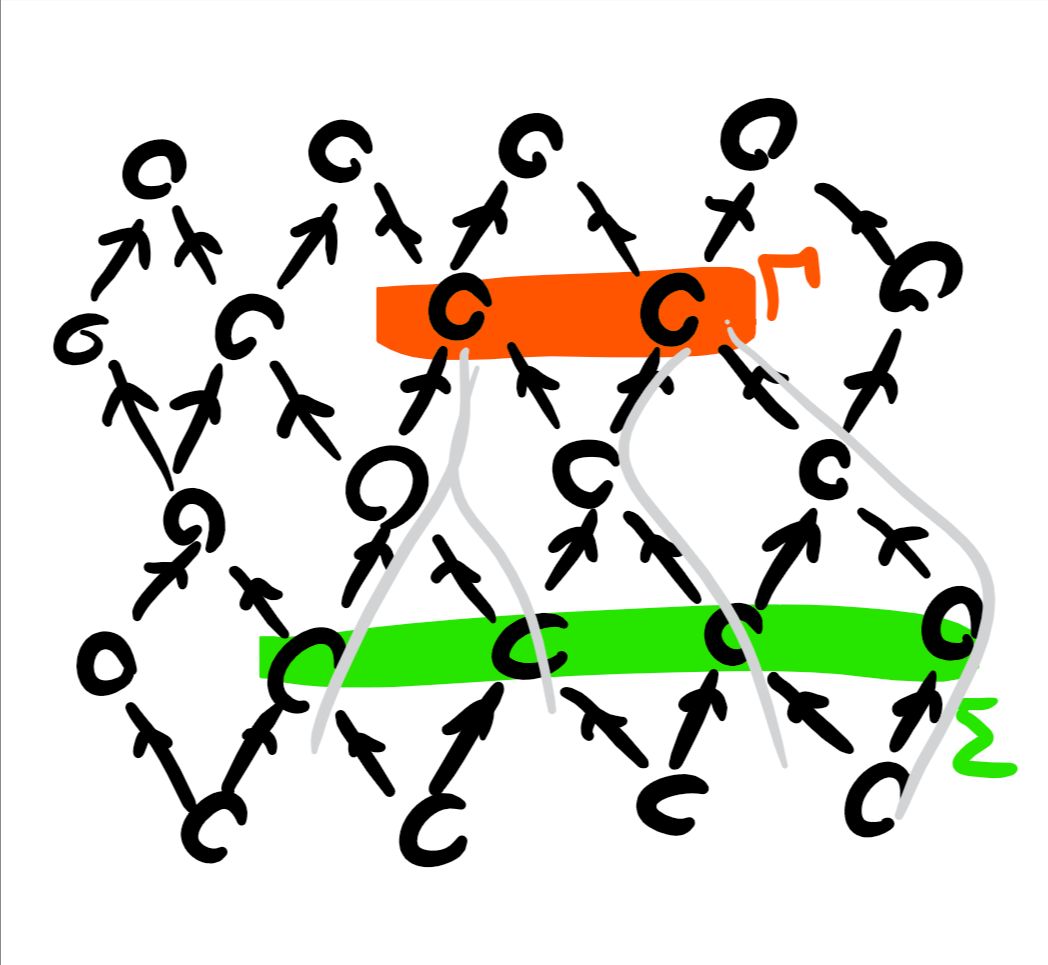}
        \hspace{5mm}
        \includegraphics[width=0.45\textwidth]{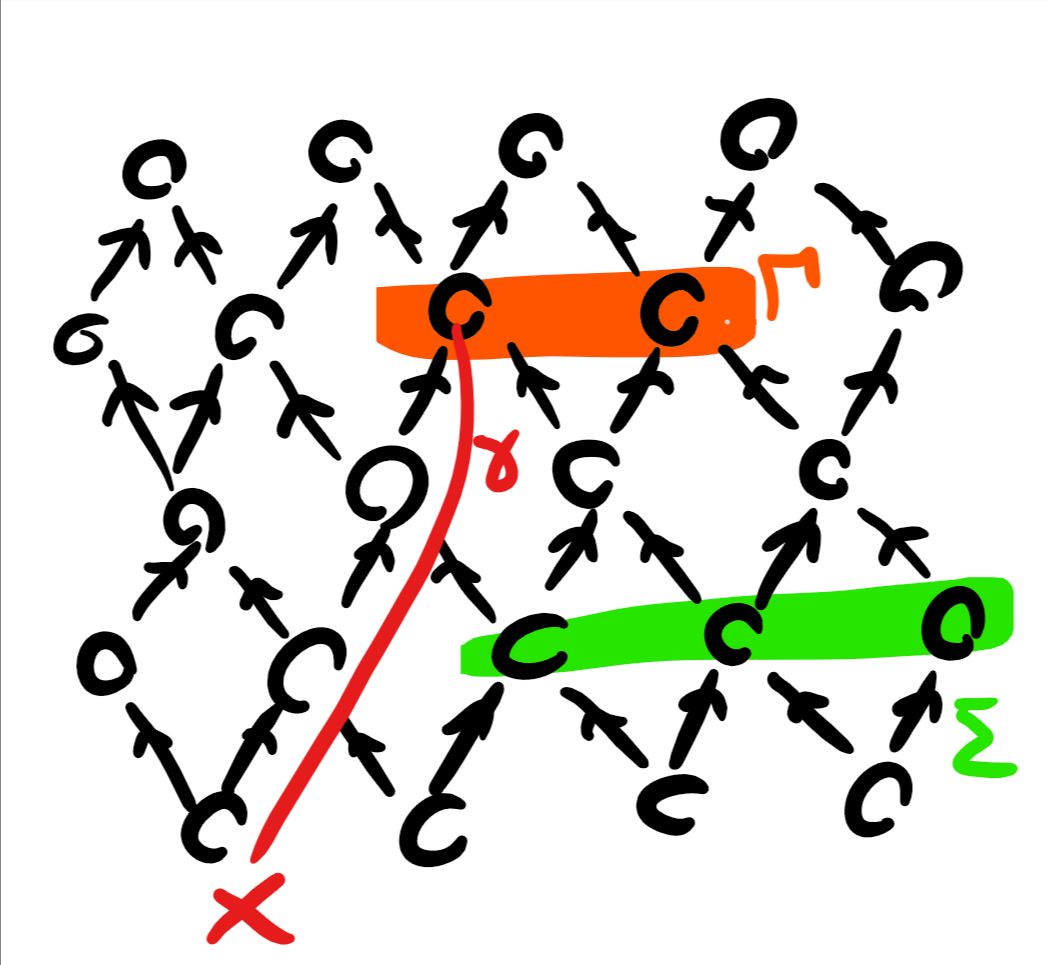}
    \end{center}
    \caption{Left: two slices $\Sigma, \Gamma$ such that $\Sigma \slicePreceq \Gamma$. Right: two slices $\Sigma, \Gamma$ such that $\Sigma \not\slicePreceq \Gamma$, highlighting a past-directed path $\gamma$ starting from an event of $\Gamma$ and not intersecting $\Sigma$ at any point.}
\end{figure}

\begin{figure}[h]
    \begin{center}
        \includegraphics[width=0.30\textwidth]{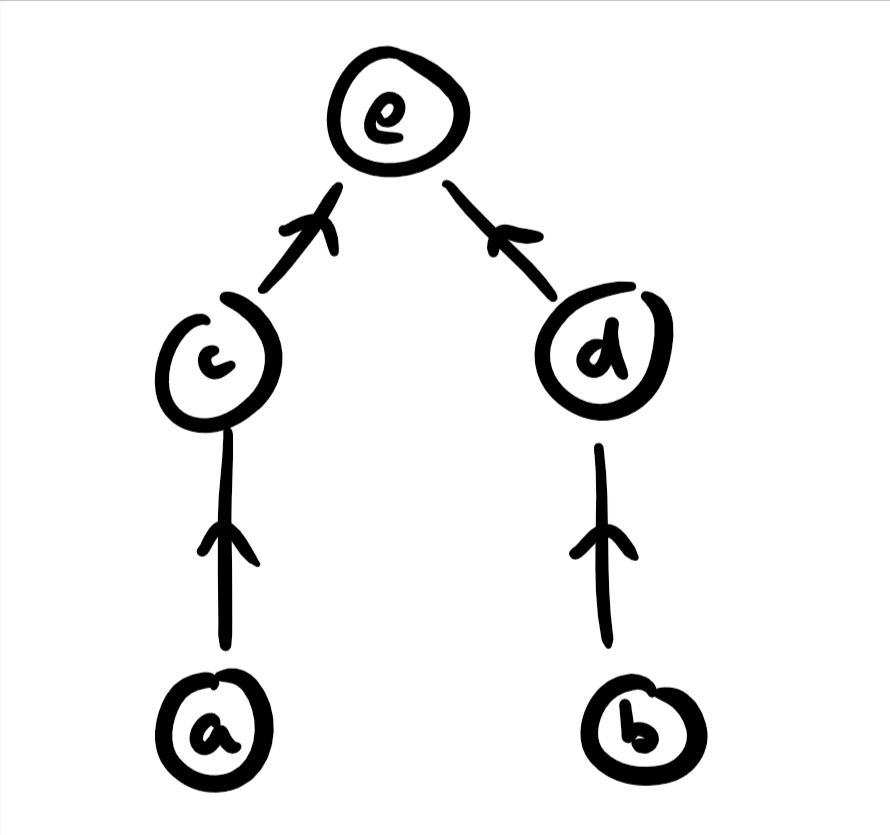}
        \hspace{2mm}
        \includegraphics[width=0.30\textwidth]{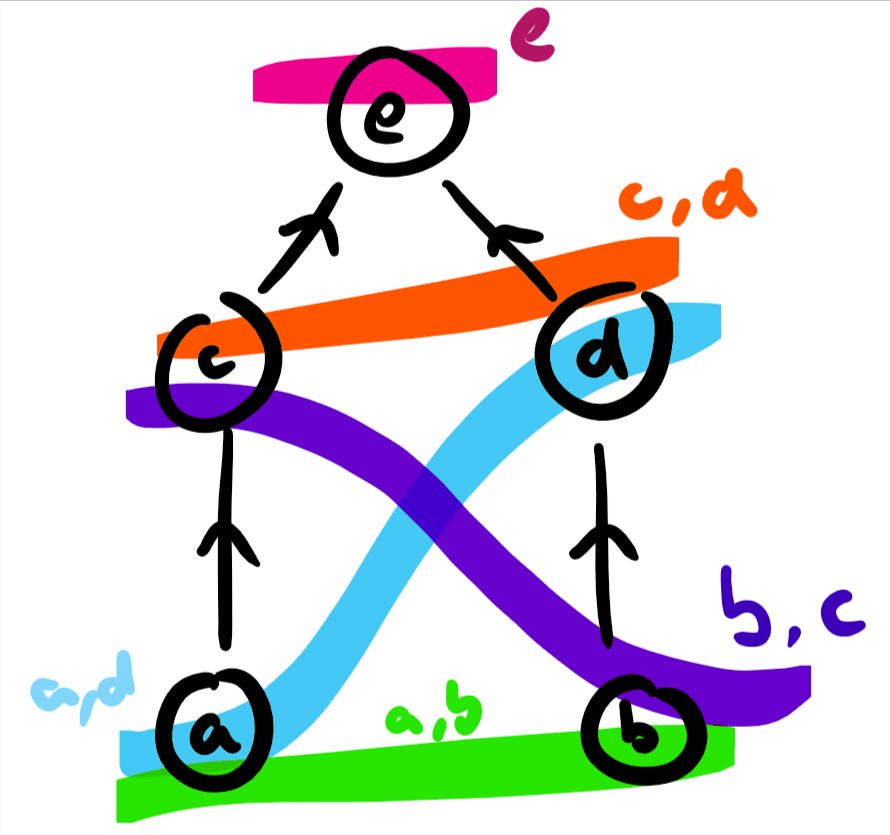}
        \hspace{2mm}
        \includegraphics[width=0.35\textwidth]{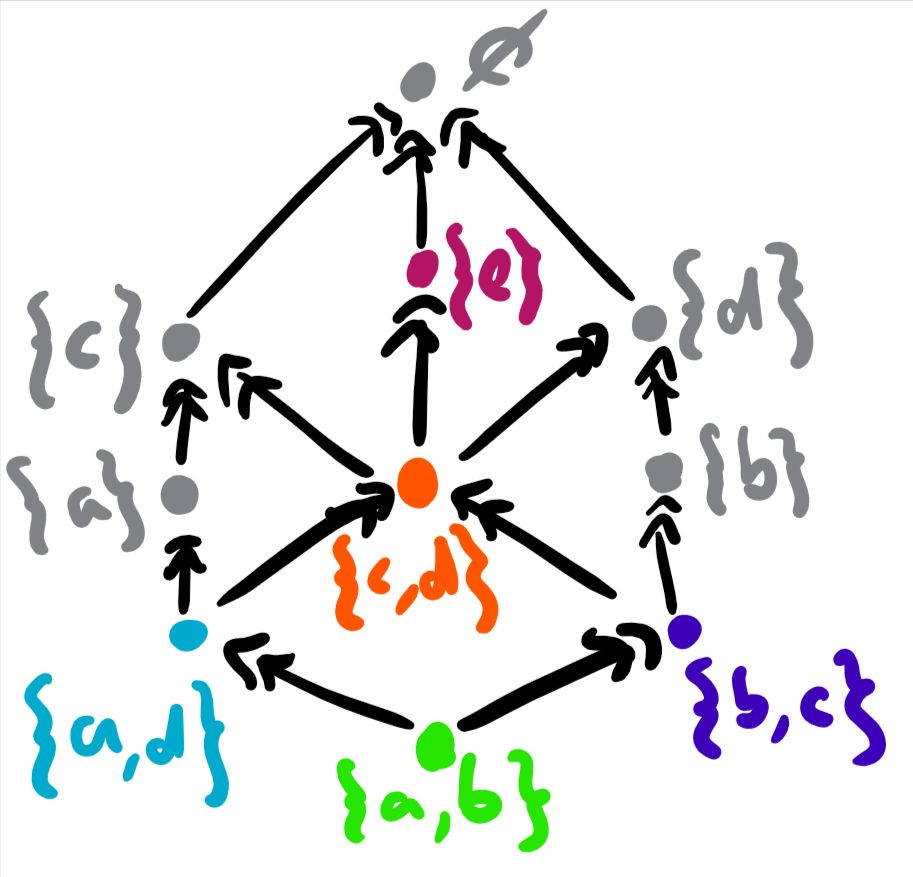}
    \end{center}
    \caption{Left: the Hasse diagram for a causal order. Centre: the maximal slices for the causal order highlighted. Right: the category of all slices for the causal order.}
\end{figure}

\subsection{Diamonds and Regions}

\begin{definition}\label{definition:causal-diamond}
    Let $\Omega$ be a causal order. If $x, y$ are two events in $\Omega$, the \emph{causal diamond} from $x$ to $y$ in $\Omega$ is the causal sub-order $(\Diamond_{x,y}, \leq) \hookrightarrow \Omega$ defined as follows:
    \begin{equation}\label{equation:causal-diamond}
        \Diamond_{x,y} := \suchthat{z \in \Omega}{x \leq z \leq y} = \bigcup_{\gamma: x \rightsquigarrow y} \!\gamma
    \end{equation}
\end{definition}

\begin{definition}\label{definition:region}
    Let $\Omega$ be a causal order. A \emph{region} in $\Omega$ is a causal sub-order $(R, \leq) \hookrightarrow \Omega$ which is \emph{convex}, i.e. one such that for all events $x, y \in R$ the causal diamond from $x$ to $y$ in $\Omega$ is a subset of $R$ (i.e. $R$ contains all paths $\gamma: x \rightsquigarrow y$ in $\Omega$).
\end{definition}

\noindent Definition~\ref{definition:region} is the order-theoretic incarnation of the requirement that causal diamonds generate the topology of Lorentzian manifolds: we could have equivalently stated it as saying that regions in $\Omega$ are all the possibly unions of causal diamonds in $\Omega$ (including the empty one). A special case of region of particular interest is the region \emph{between} two slices $\Sigma \slicePreceq \Gamma$.

\begin{definition}\label{definition:region-between-slices}
    Let $\Omega$ be a causal order and consider two slices $\Sigma \slicePreceq \Gamma$. We define the \emph{region between $\Sigma$ and $\Gamma$} as follows:
    \begin{equation}\label{equation:region-between-slices}
        \Diamond_{\Sigma,\Gamma} := \bigcup_{x \in \Sigma} \bigcup_{y \in \Gamma} \Diamond_{x,y}
    \end{equation}
    In particular, a causal diamond $\Diamond_{x, y}$ is the region between the slices $\{x\}$ and $\{y\}$. More generally, a region between slices $\Sigma$ and $\Gamma$ is the intersection $\Diamond_{\Sigma, \Gamma} = \future{\Sigma} \cap \past{\Gamma}$ of their future and past respectively.
\end{definition}

\noindent The slices $\Sigma$ and $\Gamma$ bounding the region $\Diamond_{\Sigma,\Gamma}$ can be obtained respectively as the sets of its minima $\Sigma = \min \Diamond_{\Sigma,\Gamma}$ and of its maxima $\Gamma = \max \Diamond_{\Sigma,\Gamma}$.
As a special case, a slice $\Sigma$ is the region between $\Sigma$ and $\Sigma$.
Conversely, every closed bounded region $R$---and in particular every finite region---is in the form $R = \Diamond_{\min R, \max R}$.

\begin{figure}[h]
    \begin{center}
        \includegraphics[width=0.45\textwidth]{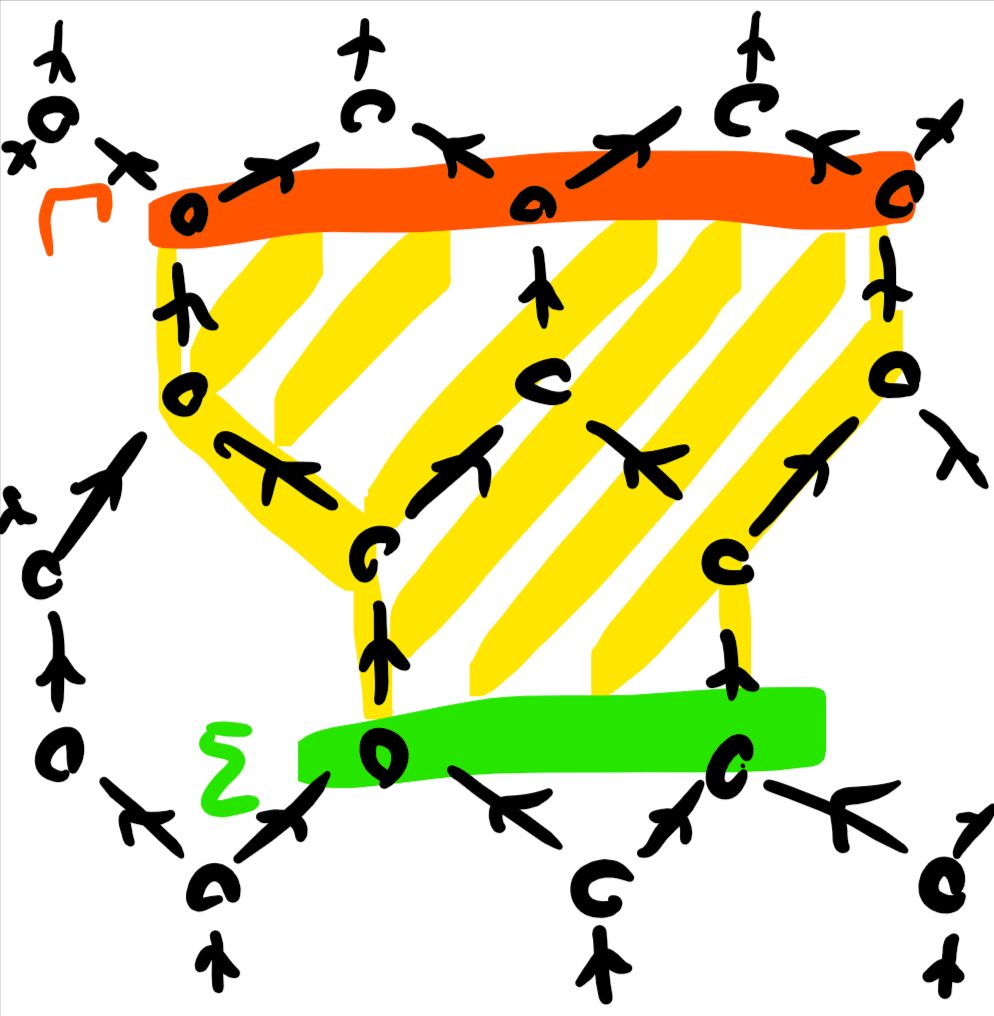}
        \hspace{5mm}
        \includegraphics[width=0.45\textwidth]{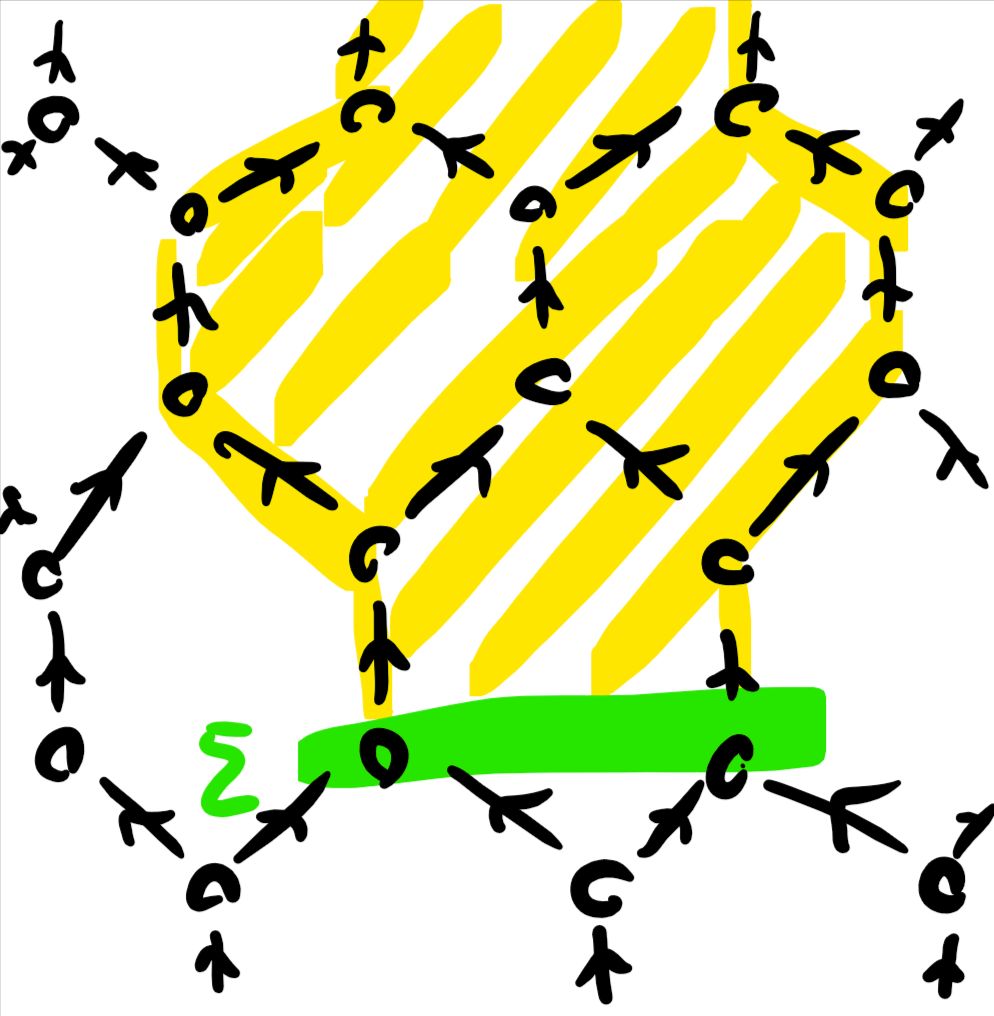}
    \end{center}
    \caption{Left: the region between two slices on the honeycomb lattice. Right: an unbounded (necessarily infinite) region on the honeycomb lattice.}
\end{figure}

\section{Categories of slices}
\label{section:categories-slices}

Because we didn't impose any topological constraints on the slices, it is possible that the category $\sliceCat{\Omega}$ will, in practice, contain objects which are to irregular or exotic for physical fields to be defined over (such as fractal slices with low topological dimension).
To obviate this issue, we consider more general categories of slices on a given causal order: this will allow us to restrict our attention to slices with any properties we desire, as long as we retain enough slices to reconstruct the structure of the causal order $\Omega$, both (1) globally and (2) locally.

No requirement is made for all products to exist in $\sliceCat{\Omega}$ to also exist on members of a more general category of slices: it is the case that certain properties desirable in practice may not be closed under arbitrary union of space-like separated slices themselves satisfying the property.
\footnote{An example of this phenomenon is given by constant-time partial Cauchy slices in Minkowski spacetime: the union of two disjoint constant-time partial Cauchy slices having the same time parameter yields another constant-time partial Cauchy slice, but the union of two space-like separated constant-time partial Cauchy slices having different time parameters does not yield a constant-time partial Cauchy slice as a result.}
However, we impose the requirement (3) that these more general categories of slices be partially monoidal sub-categories of $\sliceCat{\Omega}$.

\begin{definition}\label{definition:categories-of-slices}
    Let $\Omega$ be a causal order. A \emph{category of slices} on $\Omega$ is the full sub-category $\mathcal{C}$ of $\sliceCat{\Omega}$ defined by a given set $\obj{\mathcal{C}}$ of slices chosen in such a way that the following three conditions hold.
    \begin{enumerate}
        \item[(1)] For any two events $x, y \in \Omega$ with $x \leq y$, there exist slices $\Sigma, \Gamma \in \obj{\mathcal{C}}$ such that $x \in \Sigma$, $y \in \Gamma$ and $\Sigma \slicePreceq \Gamma$.
        \item[(2)] If $\Sigma, \Gamma$ and $\Delta$ are three slices in $\mathcal{C}$, then the restriction $(\Delta \cap \Diamond_{\Sigma, \Gamma})$ of $\Delta$ to the region $\Diamond_{\Sigma, \Gamma}$ is also a slice in $\mathcal{C}$.
        \item[(3)] The category of slices $\mathcal{C}$ is a partially monoidal subcategory of $\sliceCat{\Omega}$. In particular, $\emptyset \in \obj{\mathcal{C}}$ and whenever $\Sigma \otimes \Gamma$ exists in $\mathcal{C}$ for some $\Sigma, \Gamma \in \obj{\mathcal{C}}$ then $\Sigma \otimes \Gamma$ also exists in $\sliceCat{\Omega}$. (Associativity and unitality of $\otimes$ are strict in $\mathcal{C}$ as they are in $\sliceCat{\Omega}$.)
    \end{enumerate}
    In particular, $\sliceCat{\Omega}$ is itself a category of slices on $\Omega$.
\end{definition}

\noindent Condition (2) in the definition above tells us that we can talk about regions directly within a given category $\mathcal{C}$ of slices, without first having to reconstruct the causal order $\Omega$: this will form the basis of the connection to AQFT in Subsection~\ref{section:connection-aqft} below.

As an example of particularly well-behaved slices, we define a notion of \emph{Cauchy slices}---akin to that of Cauchy surfaces from Relativity---and remark that any ``foliation'' of a causal order in terms of such slices gives rise to what is arguably the simplest non-trivial example of category of slices.
\begin{definition}\label{definition:cauchy-slice}
    A slice $\Sigma$ on $\Omega$ is a \emph{Cauchy slice} if every causal path $\gamma: -\infty \rightsquigarrow +\infty$ in $\Omega$ intersects $\Sigma$ at some (necessarily unique) event. Cauchy slices are in particular maximal slices.
    A \emph{category of Cauchy slices} on $\Omega$ is a category $\mathcal{C}$ of slices on $\Omega$ such that every slice $\Sigma \in \obj{\mathcal{C}}$ is a subset $\Sigma \subseteq \Gamma$ of some Cauchy slice $\Gamma \in \obj{\mathcal{C}}$.
\end{definition}
\begin{proposition}\label{proposition:cauchy-slice-category}
    A \emph{foliation} on a causal order $\Omega$ is a set $\mathcal{F}$ of Cauchy slices on $\Omega$ such that:
    \begin{enumerate}
        \item[(1)] the slices in $\mathcal{F}$ are totally ordered according to $\slicePreceq$;
        \item[(2)] every event $x \in \Omega$ is contained in some slice $\Sigma \in \mathcal{F}$;
        \item[(3)] the slices in $\mathcal{F}$ are pairwise disjoint.
    \end{enumerate}
    If $\mathcal{F}$ is a foliation, write $\cauchySliceCat{\mathcal{F}}$ for the full sub-category of $\sliceCat{\Omega}$ generated by all slices which are subsets of some Cauchy slice in $\mathcal{F}$.
    Then $\cauchySliceCat{\mathcal{F}}$ is a category of Cauchy slices on $\Omega$.
\end{proposition}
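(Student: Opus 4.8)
The plan is to verify the three defining conditions of a category of slices (Definition~\ref{definition:categories-of-slices}) for the object set $\obj{\cauchySliceCat{\mathcal{F}}}$, which consists of all slices contained in some Cauchy slice of $\mathcal{F}$, and then to observe that the defining property of a \emph{category of Cauchy slices} is immediate. Indeed, each $C \in \mathcal{F}$ is a subset of itself, hence $C \in \obj{\cauchySliceCat{\mathcal{F}}}$; since by construction every object $\Sigma$ of $\cauchySliceCat{\mathcal{F}}$ satisfies $\Sigma \subseteq C$ for some such $C$, every slice in the category is a subset of a Cauchy slice \emph{in the category}, as required.

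Two of the three conditions are routine. For condition (2), if $\Sigma, \Gamma, \Delta \in \obj{\cauchySliceCat{\mathcal{F}}}$ with $\Delta \subseteq C$ for some Cauchy slice $C \in \mathcal{F}$, then $\Delta \cap \Diamond_{\Sigma,\Gamma} \subseteq \Delta \subseteq C$; being a subset of the antichain $\Delta$ it is again an antichain, hence a slice, and it lies in a Cauchy slice of $\mathcal{F}$, so it is an object of $\cauchySliceCat{\mathcal{F}}$. For condition (3), $\emptyset$ is a subset of any Cauchy slice and so lies in $\obj{\cauchySliceCat{\mathcal{F}}}$, providing the monoidal unit; since $\cauchySliceCat{\mathcal{F}}$ is a full subcategory we take its partial tensor to be the restriction of that of $\sliceCat{\Omega}$, declaring $\Sigma \otimes \Gamma$ to exist exactly when it exists in $\sliceCat{\Omega}$ and its value lies in $\obj{\cauchySliceCat{\mathcal{F}}}$. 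Then any product existing in $\cauchySliceCat{\mathcal{F}}$ exists and agrees in $\sliceCat{\Omega}$, and strict associativity and unitality are inherited, so $\cauchySliceCat{\mathcal{F}}$ is a partially monoidal subcategory.

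The substance of the argument is condition (1). Given $x \leq y$ in $\Omega$, foliation axioms (2) and (3) provide \emph{unique} Cauchy slices $\Sigma_x, \Sigma_y \in \mathcal{F}$ with $x \in \Sigma_x$ and $y \in \Sigma_y$, both objects of $\cauchySliceCat{\mathcal{F}}$. If $x = y$ then $\Sigma_x = \Sigma_y$ and $\Sigma_x \slicePreceq \Sigma_y$ by reflexivity. If $x < y$, the foliation being totally ordered forces either $\Sigma_x \slicePreceq \Sigma_y$ or $\Sigma_y \slicePreceq \Sigma_x$, and I must rule out the latter. To this end I would prove a monotonicity lemma: for Cauchy slices $\Sigma \slicePreceq \Gamma$ and any maximal causal path $\delta: -\infty \rightsquigarrow +\infty$, the unique points $\{s\} = \delta \cap \Sigma$ and $\{g\} = \delta \cap \Gamma$ satisfy $s \leq g$. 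Granting this, I extend a causal path $x \rightsquigarrow y$ (using the Hausdorff maximal principle) to a maximal path $\delta: -\infty \rightsquigarrow +\infty$; since $\Sigma_x, \Sigma_y$ are Cauchy and $x, y \in \delta$, uniqueness of intersection gives $\delta \cap \Sigma_x = \{x\}$ and $\delta \cap \Sigma_y = \{y\}$. Were $\Sigma_y \slicePreceq \Sigma_x$, the lemma would yield $y \leq x$, contradicting $x < y$; hence $\Sigma_x \slicePreceq \Sigma_y$.

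The main obstacle is the monotonicity lemma, and within it the claim that the initial segment $\delta \cap \past{g}$ is itself a causal path $-\infty \rightsquigarrow g$. I would argue this by a maximal-chain manipulation: $\delta \cap \past{g}$ is totally ordered with maximum $g$, and any totally ordered $\eta \supseteq \delta \cap \past{g}$ with $\max \eta = g$ would make $\eta \cup (\delta \cap \future{g})$ a chain containing $\delta$, forcing $\eta = \delta \cap \past{g}$ by maximality of $\delta$. Once this is established, $g \in \Gamma \subseteq \futuredom{\Sigma}$ forces $\delta \cap \past{g}$ to meet $\Sigma$; by uniqueness the meeting point is $s$, so $s \leq g$. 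The rest of the lemma and the remaining conditions are then routine.
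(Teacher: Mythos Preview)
Your proof is correct and follows the same overall strategy as the paper: verify the three conditions of Definition~\ref{definition:categories-of-slices} and observe that the Cauchy-slice requirement is immediate. Two differences are worth noting.

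For condition~(2), your argument is actually cleaner than the paper's. The paper first analyses the case where $\Sigma,\Gamma,\Delta$ are Cauchy slices in $\mathcal{F}$, using total ordering and disjointness to show $\Delta \cap \Diamond_{\Sigma,\Gamma}$ is either $\Delta$ or $\emptyset$, and only then passes to sub-slices. You bypass this entirely with the one-line observation $\Delta \cap \Diamond_{\Sigma,\Gamma} \subseteq \Delta \subseteq C$, which is all that is needed.

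For condition~(1), the situation is reversed: the paper simply asserts that if $x < y$ then ``necessarily $\Sigma \slicePreceq \Gamma$'', without justification. You supply the missing argument via your monotonicity lemma (for Cauchy slices $\Sigma \slicePreceq \Gamma$ and a maximal path $\delta$, the intersection points satisfy $s \leq g$), proved by truncating $\delta$ at $g$ and using $\Gamma \subseteq \futuredom{\Sigma}$. This is a genuine contribution: the paper's proof has a gap here that yours fills. Your chain-maximality manipulation to show $\delta \cap \past{g}$ is a path $-\infty \rightsquigarrow g$ is correct and is indeed the point requiring care.
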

\begin{proof}
    Let $\cauchySliceCat{\mathcal{F}}$ denote the full sub-category of  $\sliceCat{\Omega}$ generated by all slices which are subsets of some Cauchy slice.

    For any two events $x \leq y$ in $\Omega$, let $\Sigma, \Gamma \in \obj{\cauchySliceCat{\mathcal{F}}}$ be two Cauchy slices such that $x \in \Sigma$ and $y \in \Gamma$, the existence of such slices guaranteed by the definition of foliation.
    Because the foliation is totally ordered, we have that $\Sigma \slicePreceq \Gamma$ or $\Gamma \slicePreceq \Sigma$ (or both, if $\Sigma = \Gamma$ and $x = y$).
    If $x = y$, either works, while if $x < y$ then necessarily $\Sigma \slicePreceq \Gamma$. Either way, condition (1) for $\cauchySliceCat{\mathcal{F}}$ to be a category of slices is satisfied.

    Let $\Sigma'$, $\Gamma'$ and $\Delta'$ be three slices, respectively contained in three Cauchy slices $\Sigma$, $\Gamma$ and $\Delta$ inside the foliation. Because of total ordering and disjointness of slices in $\mathcal{F}$, the only instance in which $\Delta \cap \Diamond_{\Sigma, \Gamma} \neq \emptyset$ is when $\Sigma \slicePreceq \Delta \slicePreceq \Gamma$. In this case, $\Delta \cap \Diamond_{\Sigma, \Gamma} = \Delta \in \obj{\cauchySliceCat{\mathcal{F}}}$. Otherwise, $\Delta \cap \Diamond_{\Sigma, \Gamma} = \emptyset \in \obj{\cauchySliceCat{\mathcal{F}}}$. Either way, condition (2) for $\cauchySliceCat{\mathcal{F}}$ to be a category of slices is satisfied when $\Sigma$, $\Gamma$ and $\Delta$ are Cauchy slices. This result immediately generalises to $\Sigma'$, $\Gamma'$ and $\Delta'$: we have that $\Delta' \cap \Diamond_{\Sigma', \Gamma'} \subseteq \Delta \cap \Diamond_{\Sigma, \Gamma} \subseteq \Delta$, so that $\Delta' \in \obj{\cauchySliceCat{\mathcal{F}}}$ and condition (2) for $\cauchySliceCat{\mathcal{F}}$ to be a category of slices is satisfied.

    Finally, if $\Sigma, \Gamma$ are two slices such that $\Sigma \otimes \Gamma$ is defined in $\cauchySliceCat{\mathcal{F}}$, then $\Sigma, \Gamma$ are necessarily disjoint subsets of the same Cauchy slice $\Delta$. It is then immediate to conclude that condition (3) for $\cauchySliceCat{\mathcal{F}}$ to be a category of slices is satisfied.
\hfill$\square$\end{proof}

\subsection{The category of causal orders}

As objects, causal orders have been defined simply as posets. However, causal orders should not relate to each other simply as posets: Malament's result \cite{malament1977class} may seem at first to indicate that order-preserving maps are the correct choice, but upon closer inspection one realises that the result itself only talks about order-preserving \emph{isomorphisms}, giving no indication about other maps.

A prototypical example of the behaviour we wish to avoid is that where $\Omega' \hookrightarrow \Omega$ is a sub-poset such that $x \leq y$ in $\Omega$ for some $x, y \in \Omega'$ but $x \not{\!\!\leq} y$ in $\Omega'$.
The issue above is the reason behind the rather specific formulation of the notion of \emph{causal sub-order} in Definition~\ref{definition:causal-order}, prompting us to choose a special subclass of order-preserving maps as morphisms between causal orders.

\begin{definition}\label{definition:causal-orders-category}
    The category $\causOrdCat$ of \emph{causal orders} is the symmetric monoidal category defined as follows:
    \begin{itemize}
        \item Objects of $\causOrdCat$ are causal orders, i.e. posets.
        \item Morphisms $\Omega \rightarrow \Theta$ in $\causOrdCat$ are the order-preserving functions $f: \Omega \rightarrow \Theta$ such that $x < y$ in $\Omega$ whenever $f(x) < f(y)$ in $\Theta$.
        \item The monoidal product on objects $\Omega \otimes \Theta$ is the (forcedly) disjoint union $\Omega \sqcup \Theta := \Omega \times \{0\} \cup \Theta \times \{1\}$.
        \item The unit for the monoidal product is the empty causal order $\emptyset$.
        \item The monoidal product extends to the disjoint union of morphisms. If $f: \Omega \rightarrow \Omega'$ and $g: \Theta \rightarrow \Theta'$, then the monoidal product $f \otimes g: \Omega \otimes \Theta \rightarrow \Omega' \otimes \Theta'$ is defined as follows:
        \begin{equation}
            f \otimes g := f \sqcup g
            =
            (x,i) \mapsto
            \begin{cases}
                (f(x),0) \in \Omega'\times \{0\} &\text{ if } i = 0\\
                (g(x),1) \in \Theta'\times \{1\} &\text{ if } i = 1
            \end{cases}
        \end{equation}
    \end{itemize}
    The monoidal product is not strict nor commutative, but symmetric under the symmetry isomorphisms $s: \Omega \sqcup \Theta \rightarrow \Theta \sqcup \Omega$ defined by $s(x, i) = (x, 1-i)$.
\end{definition}

It is easy to check that the causal sub-orders $\Omega'$ of a causal order $\Omega$ according to Definition~\ref{definition:causal-order} are all sub-objects $\Omega' \hookrightarrow \Omega$ in the category $\causOrdCat$, so that the notion of causal sub-order is consistent with the usual notion of categorical sub-object.
As discussed above, the regions in a causal order $\Omega$ are examples of causal sub-orders, but not all sub-orders are regions: e.g. paths are always sub-orders but not necessarily regions.
In general, if we have $\Omega' \hookrightarrow \Omega$ then it is not necessary for $\Omega'$ to be convex, i.e. it is not necessary for $\Omega'$ to contain all paths $x \rightsquigarrow y$ in $\Omega$ for any two events $x,y \in \Omega'$: in a sense, $\Omega$ can ``refine'' the causal sub-order $\Omega'$ by adding new events in between events of the latter.
As the following proposition shows, this ``refinement'' of causal orders is the only other case we need to consider when talking about causal sub-orders.

\begin{definition}\label{definition:regions-and-refinements}
    Let $\Omega$ be a causal order and let $i: \Omega' \hookrightarrow \Omega$ be a causal sub-order of $\Omega$.
    We say that $i: \Omega' \hookrightarrow \Omega$ is a \emph{region} if the image $i(\Omega') \subseteq \Omega$ is a region in $\Omega$.
    We say that $i: \Omega' \hookrightarrow \Omega$ is a \emph{refinement} if the image $i(\Omega') \subseteq \Omega$ is such that for all $x \leq y \in \Omega$ there exist $x', y' \in \Omega'$ with $i(x') \leq x \leq y \leq i(y')$.
\end{definition}

\begin{proposition}\label{proposition:regions-and-refinements}
    Let $\Omega$ be a causal order and let $i: \Omega' \hookrightarrow \Omega$ be a causal sub-order of $\Omega$. Then $i$ factors (essentially) uniquely as $i = r \circ f$ for some region $r: \Theta \hookrightarrow \Omega$ and some refinement $f: \Omega' \hookrightarrow \Theta$.
\end{proposition}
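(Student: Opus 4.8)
The plan is to realise $\Theta$ as the \emph{convex closure} of the image $i(\Omega')$ inside $\Omega$, i.e. the smallest region of $\Omega$ containing $i(\Omega')$. Concretely I would set
\[
    \Theta := \future{i(\Omega')} \cap \past{i(\Omega')} = \suchthat{z \in \Omega}{\Exists{x', y' \in \Omega'}{i(x') \leq z \leq i(y')}},
\]
with the order inherited from $\Omega$ by restriction. First I would check that $\Theta$ is a region: it is order-convex, since from $z_1 \leq w \leq z_2$ with $z_1, z_2 \in \Theta$ one reads off bounds $i(x_1') \leq z_1 \leq w \leq z_2 \leq i(y_2')$ witnessing $w \in \Theta$; equivalently $\Theta = \bigcup_{x',y' \in \Omega'} \Diamond_{i(x'), i(y')}$ is a union of causal diamonds, which is exactly a region in the sense following Definition~\ref{definition:region}. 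Taking $r: \Theta \hookrightarrow \Omega$ to be the inclusion gives a region, and letting $f: \Omega' \to \Theta$ be the corestriction of $i$ (well-defined since each $i(x') \in \Theta$, using $x' = y'$) gives a causal sub-order inclusion because both $i$ and $\Theta$ carry the restricted order. By construction $i = r \circ f$.

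Next I would verify that $f$ is a refinement in the sense of Definition~\ref{definition:regions-and-refinements}. By the defining property of $\Theta$, every $z \in \Theta$ admits $x', y' \in \Omega'$ with $f(x') = i(x') \leq z \leq i(y') = f(y')$. Applying this separately to the two endpoints of an arbitrary pair $x \leq y$ in $\Theta$ yields $x', y' \in \Omega'$ with $f(x') \leq x$ and $y \leq f(y')$, hence $f(x') \leq x \leq y \leq f(y')$, which is precisely the refinement condition. This settles existence of the factorisation.

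For essential uniqueness I would suppose another factorisation $i = r' \circ f'$ is given, with $r': \Theta' \hookrightarrow \Omega$ a region and $f': \Omega' \hookrightarrow \Theta'$ a refinement, and write $R' := r'(\Theta')$ for its (convex) image. The claim is $R' = \Theta$. One inclusion is easy: $i(\Omega') = r'(f'(\Omega')) \subseteq R'$, and since $R'$ is convex it contains the convex closure $\Theta$ of $i(\Omega')$. For the reverse inclusion, take $w \in \Theta'$; applying the refinement property of $f'$ to the pair $w \leq w$ produces $a', b' \in \Omega'$ with $f'(a') \leq w \leq f'(b')$, and applying the order-preserving map $r'$ gives $i(a') \leq r'(w) \leq i(b')$, so $r'(w) \in \Theta$ and thus $R' \subseteq \Theta$. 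With $R' = \Theta$ established, the fact that $r'$ is a causal sub-order inclusion (injective, order by restriction) means it corestricts to an isomorphism $\phi: \Theta' \xrightarrow{\cong} \Theta$ in $\causOrdCat$. Since $r$ is the inclusion of $\Theta$, injectivity of $r'$ forces $r' = r \circ \phi$ and $f = \phi \circ f'$, and $\phi$ is the unique isomorphism with $r' = r \circ \phi$; this is the asserted essential uniqueness.

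The main obstacle I anticipate is the reverse inclusion $R' \subseteq \Theta$ in the uniqueness argument: this is exactly the step where the refinement hypothesis on $f'$ is indispensable, rather than merely that $f'$ be a causal sub-order, since without it the intermediate region $\Theta'$ could be strictly larger than the convex closure of $i(\Omega')$ and uniqueness would fail. Everything else — convexity of $\Theta$, well-definedness of $r$ and $f$, and the refinement condition for $f$ — is a direct unwinding of the definitions.
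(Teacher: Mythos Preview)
Your proposal is correct and follows essentially the same route as the paper: both take $\Theta$ to be the convex closure $\bigcup_{x',y'\in\Omega'}\Diamond_{i(x'),i(y')}$ of $i(\Omega')$, let $r$ be its inclusion and $f$ the corestriction of $i$, and prove uniqueness by showing any alternative $r'(\Theta')$ must coincide with $\Theta$. Your uniqueness argument is in fact more careful than the paper's, which simply asserts that $r'(\Theta')=\Theta$ without spelling out the two inclusions; you correctly identify that the refinement hypothesis on $f'$ is what forces $r'(\Theta')\subseteq\Theta$, a point the paper leaves implicit.
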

\begin{proof}
    Let $\Theta$ be the region of $\Omega$ obtained as the union of the causal diamonds $\Diamond_{x,y}$ for all $x,y \in i(\Omega')$. Let $r: \Theta \hookrightarrow \Omega$ be the injection of $\Theta$ into $\Omega$ as a sub-poset and let $f: \Omega' \hookrightarrow \Theta$ be the restriction of the codomain of $i$ to $\Theta$: clearly $i = r \circ f$, $r$ is a region and $f$ is a refinement (because of how $\Theta$ was constructed).

    Now let $\Theta'$ be such that  $r': \Theta' \hookrightarrow \Omega$ is a region and $f': \Omega' \hookrightarrow \Theta'$ is a refinement with $i = r' \circ f'$: to prove essential uniqueness, we want to show that there is some isomorphism $\theta: \Theta' \rightarrow \Theta$ such that $f = \theta \circ f'$ and $r' = r \circ \theta$.
    Because $r' \circ f' = i$, the image $r'(\Theta')$ is the region $\Theta$ itself, so that the restriction $\theta: \Theta' \rightarrow \Theta$ of the codomain of $r'$ to $\Theta$ is an isomorphism with $r' = r \circ \theta$.
    Now we have $r \circ f = i = r' \circ f' = r \circ (\theta \circ f')$: but $r$ is a monomorphism (i.e. it is injective), so necessarily $f = \theta \circ f'$.
\hfill$\square$\end{proof}

\noindent The category $\causOrdCat$ also has epi-mono factorisation, i.e. every morphism $f: \Omega' \rightarrow \Omega$ can be factorised (essentially) uniquely as an epimorphism (i.e. a surjective map) $q: \Omega' \rightarrow \Theta$ and a monomorphism (i.e. an injective map) $i: \Theta \hookrightarrow \Omega$. We have already adopted the nomenclature \emph{causal sub-order} for the latter form of morphism, while we will henceforth use \emph{causal quotient} to refer to the former.
A snippet of the the causal quotient $q: H \rightarrow D$ from the (infinite) honeycomb lattice to the (infinite) diamond lattice is shown in Figure~\ref{figure:honeycomb-diamond-lattices}.

\begin{figure}[h]
    \begin{center}
        \includegraphics[width=0.95\textwidth]{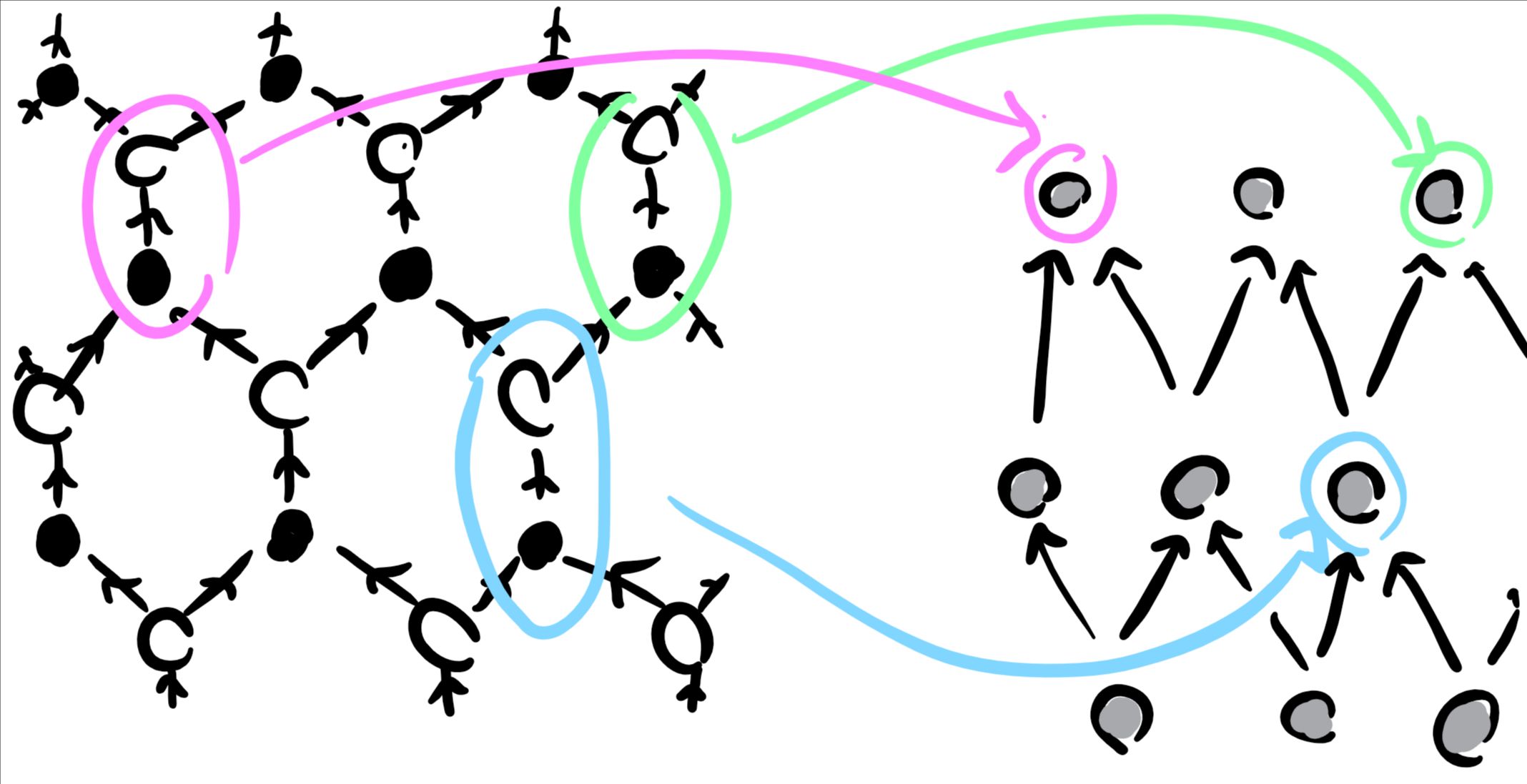}
    \end{center}
    \caption{Causal quotient from the honeycomb lattice to the diamond lattice. The pre-images of three events from the diamond lattice are highlighted.}
    \label{figure:honeycomb-diamond-lattices}
\end{figure}

If $\Sigma$ is a slice in $\Omega$, we can define its \emph{pullback} $f^\ast(\Sigma)$ to be the causal suborder of $\Theta$ generated by $\suchthat{x \in \Theta}{f(x) \in \Sigma}$, i.e. largest causal sub-order of $\Theta$ mapped onto $\Sigma$.
The pullback of a slice $\Sigma$ has a rather simple structure: the slices $\Gamma$ in the pullback $f^\ast(\Sigma)$ are exactly the disjoint unions $\Gamma := \bigsqcup_{x \in \Sigma} \Gamma_x$ for all possible choices $(\Gamma_x)_{x \in \Sigma}$ of slice sections of $f$ over the individual events $x$ of $\Sigma$.
\begin{equation}\label{equation:slice-sections}
    (\Gamma_x)_{x \in \Sigma}
    \in
    \prod_{x \in \Sigma}
    \obj{\sliceCat{f^\ast(\{x\})}}
\end{equation}

A depiction of the pullback under the causal quotient $q: H \rightarrow D$ described above can be seen in Figure~\ref{figure:honeycomb-diamond-lattices-preimages}.

\noindent If $\mathcal{C}$ is a category of slices on $\Omega$, we can define its \emph{pullback} along $f$ to be the full sub-category $f^\ast(\mathcal{C})$ of $\sliceCat{\Theta}$ spanned by all slices $\Gamma$ in $\Theta$ such that $\Gamma \in \obj{\sliceCat{f^\ast(\Sigma)}}$ for some $\Sigma \in \obj{\mathcal{C}}$.
The relationship $\slicePreceq$ between slices in pullbacks is a little complicated and its full characterisation is left to future work.

\begin{remark}\label{remark:pullback-slices}
    The two notions of pullback defined above---for slices and for categories of slices---are related by the observation that $\sliceCat{f^\ast(\Sigma)} = f^\ast(\sliceCat{\Sigma})$ for any slice $\Sigma$ of $\Omega$ (which can equivalently be seen as a causal sub-order $\Sigma \hookrightarrow \Omega$).
\end{remark}

\begin{figure}[h]
    \begin{center}
        \includegraphics[width=0.95\textwidth]{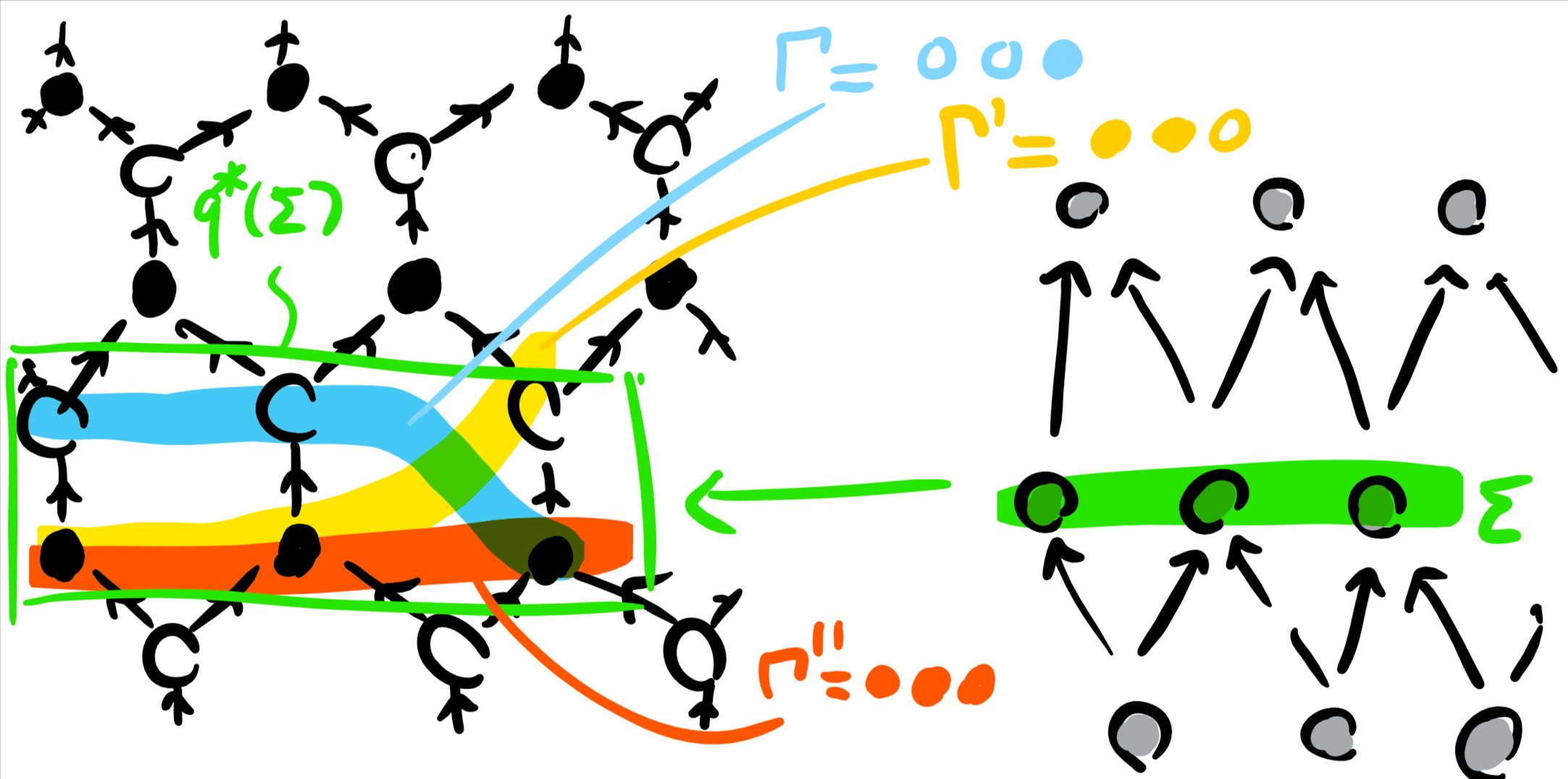}
    \end{center}
    \caption{A slice $\Sigma$ on the diamond lattice and three slices $\Gamma$, $\Gamma'$ and $\Gamma''$ in its pullback $q^\ast(\Sigma)$ on the honeycomb lattice.}
    \label{figure:honeycomb-diamond-lattices-preimages}
\end{figure}

\section{Causal field theories}

In the previous Section, we have defined several commonplace notions from Relativity in the more abstract context of causal orders.
In this Section, we endow our causal order with fields, living in an appropriate symmetric monoidal category.

\subsection{Categories for quantum fields}
\label{subsection:categories-for-quantum-fields}

Depending on the specific applications, there are many symmetric monoidal categories available to model quantum fields.

\begin{itemize}
    \item If the context is finite-dimensional, quantum fields can be taken to live in the category $\cpmCat{\fhilbCat}$ of finite-dimensional Hilbert spaces and completely positive maps between them.
    \item If the context is finite-dimensional and super-selected systems are of interest, quantum fields can be taken to live in the category $\cpstarCat{\fhilbCat} \cong \fcstaralgCat$ of finite-dimensional C*-algebras and completely positive maps between them.
    \item If the context is finite-dimensional, an even richer playground available for quantum fields is the category $\splitCat{\cpmCat{\fhilbCat}}$: this is the Karoubi envelope of the category $\cpmCat{\fhilbCat}$, containing $\cpstarCat{\fhilbCat}$ and a number of other systems of operational interest (such as fixed-state systems and constrained systems, see e.g. \cite{gogioso2017categorical}).
    \item If the context is infinite-dimensional, e.g. in the case of AQFT \cite{halvorson2006algebraic,heunen2009topos}, the categories usually considered for quantum fields are the category $\hilbCat$ of Hilbert spaces and bounded linear maps, the category $\cstaralgCat$ of C*-algebras and its subcategories $\wstaralgCat$ of W*-algebras (sometimes known as ``abstract'' von Neumann algebras) and $\vnalgCat$ of (concrete) von Neumann algebras.
    \item The categories $\hilbCat$, $\cstaralgCat$, $\wstaralgCat$ and $\vnalgCat$ have some annoying limitations, so in an infinite-dimensional context one can alternatively work with hyperfinite quantum systems \cite{gogioso2018quantum}, which incorporate infinities and infinitesimals to offer additional features---such as duals, traces and unital Frobenius algebras---over plain Hilbert spaces and C*-algebras.
\end{itemize}

\noindent The framework we present here is agnostic to the specific choice of process theory (aka symmetric monoidal category) for quantum fields. In fact, it is agnostic to the specific physical theory considered for the fields: any causal process theory can be considered.

\subsection{Causal field theories}

\begin{definition}\label{definition:causal-quantum-field-theory}
    Let $\Omega$ be a causal order. A \emph{causal field theory} $\Psi$ on $\Omega$ is a monoidal functor $\Psi: \mathcal{C} \rightarrow \mathcal{D}$ from a category $\mathcal{C}$ of slices on $\Omega$ to some symmetric monoidal category $\mathcal{D}$, which we refer to as the \emph{field category}.
\end{definition}

\begin{remark}
        It may sometimes be desirable to add a requirement of injectivity on objects for the functor $\Psi$.
        This has two main motivations, one of physical character and one of mathematical character.
        Physically, injectivity means that the field spaces corresponding to distinct events have distinct identities (although they can be isomorphic).
        Mathematically, injectivity means that the image of the functor is itself a sub-category of $\mathcal{D}$, matching the style used by other works on compositional causality \cite{coecke2013causal,coecke2016terminality,gogioso2017categorical,pinzani2019categorical}.
        While we do not require this as part of our definition, we will take care for the constructions hereafter to be sufficiently general to accommodate the possibility that such a requirement be imposed.
\end{remark}

\noindent We now ask ourselves: what physical information does the functor $\Psi$ encode?
On objects, $\Psi$ associates each space-like slice $\Sigma$ to the space $\Psi(\Sigma)$ of fields over that slice: every point in $\Psi(\Sigma)$ is a valid initial condition for field evolution in the future domain of dependence for $\Sigma$.

\begin{remark}\label{remark:finite-slices}
    If $\Sigma$ is finite and the singleton slices $\{x\}$ for the individual events $x \in \Sigma$ are all in the chosen category $\mathcal{C}$ of slices, then the action of $\Psi$ on $\Sigma$ always factorises into the tensor product of its action on the individual events:
    \begin{equation}
        \Psi(\Sigma) = \bigotimes_{x \in \Sigma} \Psi(\{x\})
    \end{equation}
\end{remark}

\noindent On morphisms, $\Psi$ associates $\Sigma \slicePreceq \Gamma$ to a morphism $\Psi(\Sigma) \rightarrow \Psi(\Gamma)$: this is a specification of how the field evolves from $\Sigma$ to $\Gamma$, i.e. this is the map sending a field state $\ket{\phi}$ over the initial slice $\Psi(\Sigma)$ to the evolved field state $\Psi\left(\Sigma \slicePreceq \Gamma\right)\ket{\phi}$ over the final slice $\Psi(\Gamma)$. This identification of functorial action with field evolution is the core idea of our work.
In particular, it explains our specific definition of morphisms in $\sliceCat{\Omega}$, and hence in all categories of slices: $\Sigma \slicePreceq \Gamma$ if and only if the field data on $\Sigma$ is sufficient to derive the field data on $\Gamma$, assuming causal field evolution.
Monoidality of the functor on objects says that the space of fields on the union of disjoint slices is the monoidal product---the tensor product, when working in the familiar linear settings of Hilbert spaces, C*-algebras, von Neumann algebras, etc---of the spaces of fields on the individual slices.
Note that this requirement is stronger than the requirement imposed by AQFT, where field algebras over space-like separated diamonds are only required to commute as sub-algebras of the the global field algebra, not necessarily to take the form of a tensor product sub-algebra.

Functoriality and monoidality on morphisms have some interesting consequences, which we now discuss in detail.
Let $\Sigma \slicePreceq \Sigma'$ and $\Gamma \slicePreceq \Gamma'$ for a pair of space-like separated slices $\Sigma$ and $\Gamma$ and another pair of space-like separated slices $\Sigma'$ and $\Gamma'$.
Consider the field evolution between the two disjoint unions of slices:

\begin{equation}
    \Psi\big(
        (\Sigma \otimes \Gamma) \slicePreceq (\Sigma' \otimes \Gamma')
    \big)
    :\Psi(\Sigma) \otimes \Psi(\Gamma) \rightarrow \Psi(\Sigma') \otimes \Psi(\Gamma')
\end{equation}

\noindent Monoidality on morphisms implies that the field evolution above factors as the product of the individual field evolutions $\Psi(\Sigma) \rightarrow \Psi(\Sigma')$ and $\Psi(\Gamma) \rightarrow \Psi(\Gamma')$:

\begin{equation}
    \Psi\big(
        (\Sigma \otimes \Gamma) \slicePreceq (\Sigma' \otimes \Gamma')
    \big)
    =
    \Psi(\Sigma \slicePreceq \Sigma') \otimes \Psi(\Gamma \slicePreceq \Gamma')
\end{equation}

\noindent This may look surprising at first, but it becomes entirely natural upon observing the following.

\begin{proposition}\label{proposition:monoidal-product-slices}
    Let $\Omega$ be a causal order. If $\Sigma$ and $\Gamma$ are space-like separated slices in $\Omega$ and $\Sigma \slicePreceq \Sigma'$, then $\Sigma'$ and $\Gamma$ are also space-like separated slices.
\end{proposition}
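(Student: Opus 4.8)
The plan is to reduce the claim to the containment relations between domains of dependence and causal pasts/futures established in Proposition~\ref{proposition:domain-of-dependence-past-future-2}. First I would observe that $\Sigma'$ is automatically a slice, since the very existence of the morphism $\Sigma \slicePreceq \Sigma'$ presupposes that $\Sigma'$ is an object of the category of slices; hence the only genuine content of the proposition is the space-like separation of $\Sigma'$ and $\Gamma$.

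Next I would exploit the symmetry inherent in Definition~\ref{definition:spacelike-separated}: for two subsets $A, B \subseteq \Omega$, the condition $A \cap (\future{B} \cup \past{B}) = \emptyset$ is equivalent to $B \cap (\future{A} \cup \past{A}) = \emptyset$, since some $a \in A$ lies in $\future{B}$ exactly when $b \leq a$ for some $b \in B$, i.e.\ exactly when that same $b$ lies in $\past{A}$ (and symmetrically for the past). I would therefore restate the goal as $\Gamma \cap (\future{\Sigma'} \cup \past{\Sigma'}) = \emptyset$ and the hypothesis on $\Sigma, \Gamma$ as $\Gamma \cap (\future{\Sigma} \cup \past{\Sigma}) = \emptyset$.

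The key step is to apply Proposition~\ref{proposition:domain-of-dependence-past-future-2} with $A = \Sigma$ and $B = \Sigma'$, which is legitimate because $\Sigma \slicePreceq \Sigma'$ means precisely $\Sigma' \subseteq \futuredom{\Sigma}$. This yields $\future{\Sigma'} \subseteq \future{\Sigma}$ together with $\past{\Sigma'} \subseteq \past{\Sigma} \cup \future{\Sigma}$, and hence $\future{\Sigma'} \cup \past{\Sigma'} \subseteq \future{\Sigma} \cup \past{\Sigma}$. Intersecting this inclusion with $\Gamma$ and invoking the hypothesis then forces $\Gamma \cap (\future{\Sigma'} \cup \past{\Sigma'}) = \emptyset$, which by the symmetry noted above is exactly the assertion that $\Sigma'$ and $\Gamma$ are space-like separated.

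I do not anticipate a serious obstacle: once Proposition~\ref{proposition:domain-of-dependence-past-future-2} is in hand the argument is a short containment chase. The only subtle point worth flagging is that the bound on $\past{\Sigma'}$ introduces an \emph{extra} $\future{\Sigma}$ summand; this is harmless precisely because the hypothesis already excludes $\Gamma$ from $\future{\Sigma}$, so the evolved slice cannot sneak into the causal past of $\Gamma$ through events lying to the future of $\Sigma$. Physically, this is the reassuring statement that evolving one slice forward in time cannot create a causal relationship with a space-like separated slice.
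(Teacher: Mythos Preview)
Your proposal is correct and follows essentially the same route as the paper's proof: both invoke Proposition~\ref{proposition:domain-of-dependence-past-future-2} (with $A=\Sigma$, $B=\Sigma'$) to obtain $\future{\Sigma'}\cup\past{\Sigma'}\subseteq\future{\Sigma}\cup\past{\Sigma}$ and then intersect with $\Gamma$. Your additional remarks on the symmetry of Definition~\ref{definition:spacelike-separated} and on $\Sigma'$ automatically being a slice are sound but not strictly needed, as the paper's argument leaves them implicit.
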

\begin{proof}
    If $\Sigma$ and $\Gamma$ are space-like separated, then $\Gamma \cap (\future{\Sigma} \cup \past{\Sigma}) = \emptyset$. Because $\Sigma \slicePreceq \Sigma'$, furthermore, Proposition~\ref{proposition:domain-of-dependence-past-future-2} tells us that $\future{\Sigma'} \cup \past{\Sigma'} \subseteq \future{\Sigma} \cup \past{\Sigma}$. We conclude that $\Gamma \cap (\future{\Sigma'} \cup \past{\Sigma'}) = \emptyset$, i.e. that $\Sigma'$ and $\Gamma$ are also space-like separated.
\hfill$\square$\end{proof}

\noindent Proposition~\ref{proposition:monoidal-product-slices} above tells us that in our factorisation scenario the entire region between $\Sigma$ and $\Sigma'$ on one side and the entire region between $\Gamma$ and $\Gamma'$ on the other side are space-like separated.
Thus any causal field evolution from $\Sigma \otimes \Gamma$ to $\Sigma' \otimes \Gamma'$ would physically be expected to factor: this can be seen as a manifestation of the \emph{principle of locality} for field theories, sometimes also known as ``clustering''.

\subsection{Causality and no-signalling}
\label{subsection:causal-field-theories-no-signalling}

Because any category $\mathcal{C}$ of slices on a causal order $\Omega$ is a partially monoidal subcategory of $\sliceCat{\Omega}$, in particular it necessarily contains the empty slice $\empty$ (the monoidal unit).
We define the following family of effects, indexed by all slices $\Sigma \in \obj{\mathcal{C}}$:

\begin{equation}
    \trace{\Sigma} := \Psi(\Sigma \slicePreceq \emptyset)
\end{equation}

\noindent By monoidality we have that $\trace{\emptyset}$ is some scalar in the field category $\mathcal{D}$ and that the family respects the partial monoidal structure:

\begin{equation}
    \trace{\Sigma \otimes \Gamma}
    =
    \Psi\big((\Sigma \otimes \Gamma) \slicePreceq \emptyset\big)
    =
    \Psi\big((\Sigma \slicePreceq \emptyset) \otimes (\Gamma \slicePreceq \emptyset)\big)
    =
    \Psi(\Sigma \slicePreceq \emptyset) \otimes \Psi(\Gamma \slicePreceq \emptyset)
    =
    \trace{\Sigma} \otimes \trace{\Gamma}
\end{equation}

\noindent By functoriality, furthermore, the family of effects above is respected by the image of the functor:

\begin{equation}
    \trace{\Gamma} \circ \Psi(\Sigma \slicePreceq \Gamma)
    =
    \Psi(\Gamma \slicePreceq \emptyset) \circ \Psi(\Sigma \slicePreceq \Gamma)
    =
    \Psi(\Sigma \slicePreceq \emptyset)
    =
    \trace{\Sigma}
\end{equation}

\noindent This means that the family of effects $(\trace{\Sigma})_{\Sigma \in \obj{\mathcal{C}}}$ defined above is an \emph{environment structure} and that---as long as injectivity of $\Psi$ is imposed---the image of the causal field theory $\Psi$ is a \emph{causal category} \cite{coecke2013causal,coecke2016terminality}
\footnote{We have taken the liberty to extend the definition of environment structures to partially monoidal categories, such as the image of a $\Psi$ injective on objects under the partial monoidal product induced by the partial monoidal product of the domain category $\mathcal{C}$.}.
Physically, this means that the field evolution happens in a no-signalling way: if the effects $(\trace{\Sigma})_{\Sigma \in \obj{\mathcal{C}}}$ are used as \emph{discarding maps}---generalising the partial traces of quantum theory---then the field state over a given slice $\Sigma$ does not depend on the field state over slices which are in the future of $\Sigma$ or are space-like separated from $\Sigma$.

This emergence of causality and no-signalling from functoriality is in fact a consequence of a breaking of time symmetry which happened in the very definition of the ordering between slices.
Indeed, consider the ``time-reversed'' causal order $\Omega^{rev}$, obtained by reversing all causal relations in $\Omega$ (i.e. $y \leq x$ in $\Omega^{rev}$ if and only if $x \leq y$ in $\Omega$).
The slices for $\Omega^{rev}$ are exactly the slices for $\Omega$, i.e. the categories of all slices $\sliceCat{\Omega^{rev}}$ and $\sliceCat{\Omega}$ have the same objects.
If time symmetry were to hold, we would expect the arrows in $\sliceCat{\Omega^{rev}}$ to be exactly the reverse of the arrows in $\sliceCat{\Omega}$. However, the conditions defining the arrows in both categories are as follows:
\begin{itemize}
    \item $\Sigma \slicePreceq \Gamma$ in $\sliceCat{\Omega}$ iff $\Gamma \subseteq \futuredom{\Sigma}$ in $\Omega$;
    \item $\Gamma \slicePreceq \Sigma$ in $\sliceCat{\Omega^{rev}}$ iff $\Sigma \subseteq \futuredom{\Gamma}$ in $\Omega^{rev}$, i.e. iff $\Sigma \subseteq \pastdom{\Gamma}$ in $\Omega$.
\end{itemize}
The two conditions that $\Gamma \subseteq \futuredom{\Sigma}$ and $\Sigma \subseteq \pastdom{\Gamma}$, both in $\Omega$, are not in general equivalent: this shows that time symmetry is broken by our definition of the relationship between slices, ultimately leading to the emergence of causality and no-signalling constraints on functorial evolution of quantum fields.

\section{Connection with Algebraic Quantum Field Theory}
\label{section:connection-aqft}

The definition of causal field theories looks somewhat similar to that of \emph{Topological Quantum Field Theories (TQFTs)} as functors from categories of cobordism to categories of vector spaces.
The big difference between the causal field theories we defined above and TQFTs is that the latter take the basic building blocks for field theories to be defined over \emph{arbitrary} topological spacetimes, while the former define the evolution over a \emph{single given} spacetime.
This difference is an aspect of a general abstract duality between \emph{compositionality} and \emph{decompositionality}.

In \emph{compositionality}, larger objects are created by \emph{composing} together given elementary building blocks in all possible ways: this is the approach behind an ever growing zoo of process theories (e.g. see \cite{coecke2017picturing} and references therein).
In \emph{decompositionality}, on the other hand, larger objects are given as a whole and subsequently decomposed into smaller constituents, with composition of the latter constrained by the context in which they live: this approach, based on partially monoidal structure, was recently introduced by \cite{gogioso2019process} as a way to talk about compositionality in physical theories where a universe is fixed beforehand.
While TQFTs are compositional \cite{kock2003frobenius,lurie2009classification}, causal field theories are more naturally understood from the decompositional perspective.

In fact, decompositionality is the key ingredient in a completely different family of approaches to quantum theory, including Algebraic Quantum Field Theory (AQFT) \cite{halvorson2006algebraic} and the topos-theoretic approaches \cite{heunen2009topos,doring2008thing}.
In AQFT, specifically, the relationship between fields and the topology of spacetime is encapsulated into the structure of a presheaf, having as its domain the poset formed by diamonds in Minkowski space under inclusion and as its codomain a category of C*-algebras and *-homomorphisms.
To understand the decompositional character of causal field theories, we draw inspiration from the AQFT approach and turn our functors, defined on slices, into presheafs defined on regions.
We will, however, dispense of the algebras themselves: as mentioned earlier in Subsection~\ref{subsection:categories-for-quantum-fields}, our approach is independent of the specific process theory chosen for the fields.

We begin by showing that categories of slices can be restricted to regions, as long as we take care to define regions in such a way as to respect the restrictions imposed by a specific choice of category of slices.

\begin{definition}\label{definition:bounded-regions-categories-of-slices}
    A \emph{bounded region} in a category of slices $\mathcal{C}$ on a causal order $\Omega$ is a region on $\Omega$ in the form $\Diamond_{\Sigma, \Gamma}$ for some $\Sigma, \Gamma \in \obj{\mathcal{C}}$.
    Bounded regions in $\mathcal{C}$ form a poset $\boundedRegionCat{\mathcal{C}}$ under inclusion.
\end{definition}
\begin{definition}\label{definition:regions-categories-of-slices}
    A \emph{region} in a category of slices $\mathcal{C}$ is a region $R$ on $\Omega$ which can be obtained as a union $R = \bigcup_{\lambda \in \Lambda} \Diamond_{\Sigma_\lambda, \Gamma_\lambda}$ of a family $(\Diamond_{\Sigma_\lambda, \Gamma_\lambda})_{\lambda \in \Lambda}$, closed under finite unions, of bounded regions in $\mathcal{C}$.
    Regions in $\mathcal{C}$ also form a poset $\regionCat{\mathcal{C}}$ under inclusion, with $\boundedRegionCat{\mathcal{C}}$ as a sub-poset.
\end{definition}

\noindent Note that if $\mathcal{C} = \sliceCat{\Omega}$ then the regions in $\mathcal{C}$ are exactly the regions on $\Omega$: by definition, a region $R$ on $\Omega$ is the union of the bounded regions $\Diamond_{x,y}$ for all $x,y \in R$.

\begin{proposition}\label{proposition:categories-slices-restriction}
    Let $\mathcal{C}$ be a category of slices and $R$ be a region in it.
    The \emph{restriction} $\mathcal{C}\vert_{R}$ of $\mathcal{C}$ to the region $R$, defined as the full sub-category of $\mathcal{C}$ spanned by the slices $\Delta \in \obj{\mathcal{C}}$ such that $\Delta \subseteq R$, is itself a category of slices.
\end{proposition}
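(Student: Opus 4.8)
The plan is to verify that $\mathcal{C}\vert_R$ satisfies the three defining conditions of Definition~\ref{definition:categories-of-slices}, read over the causal order $R$ (the region regarded as a poset in its own right, with the order induced from $\Omega$). First I would dispatch the routine compatibilities between structure computed in $R$ and in $\Omega$. Since $R$ carries the induced order, the antichains of $R$ are exactly the slices of $\Omega$ contained in $R$, so $\obj{\mathcal{C}\vert_R} = \suchthat{\Delta \in \obj{\mathcal{C}}}{\Delta \subseteq R}$ really does consist of slices of $R$. For $\Sigma,\Gamma \subseteq R$, convexity of $R$ gives $\Diamond_{\Sigma,\Gamma} = \future{\Sigma}\cap\past{\Gamma} \subseteq R$, so causal diamonds and regions between slices agree whether formed in $R$ or in $\Omega$; likewise $\future{B}\cap R$ and $\past{B}\cap R$ are the future and past of $B$ inside $R$, so space-like separation of subsets of $R$ is the same notion in $R$ and in $\Omega$. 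Finally, every causal path of $R$ into a point of $\Gamma$ extends to a causal path of $\Omega$, and by maximality its intersection with $\Sigma \subseteq R$ lies back on the $R$-path; hence $\Sigma \slicePreceq \Gamma$ in $\sliceCat{\Omega}$ implies $\Sigma \slicePreceq \Gamma$ in $\sliceCat{R}$, so the morphisms inherited from $\mathcal{C}$ really are morphisms of $\sliceCat{R}$.

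Granting these, conditions (2) and (3) are straightforward. For (2), given $\Sigma,\Gamma,\Delta \in \obj{\mathcal{C}\vert_R}$, the restriction $\Delta \cap \Diamond_{\Sigma,\Gamma}$ lies in $\Diamond_{\Sigma,\Gamma} \subseteq R$ by convexity, and it lies in $\mathcal{C}$ by condition (2) for $\mathcal{C}$; hence it is an object of $\mathcal{C}\vert_R$. For (3), $\emptyset \subseteq R$ gives $\emptyset \in \obj{\mathcal{C}\vert_R}$, and whenever $\Sigma \otimes \Gamma$ is defined in $\mathcal{C}\vert_R$ the slices $\Sigma,\Gamma \subseteq R$ are space-like separated (by the agreement above), so $\Sigma \sqcup \Gamma = \Sigma \cup \Gamma \subseteq R$; this union is the product in $\sliceCat{\Omega}$ and lies in $\mathcal{C}$, hence in $\mathcal{C}\vert_R$, with strict associativity and unitality inherited.

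The crux, and the step I expect to be the \textbf{main obstacle}, is condition (1): given events $x \leq y$ in $R$, I must produce $\Sigma,\Gamma \in \obj{\mathcal{C}\vert_R}$ with $x \in \Sigma$, $y \in \Gamma$ and $\Sigma \slicePreceq \Gamma$. The natural attempt is to take ambient witnesses $\Sigma_0 \ni x$, $\Gamma_0 \ni y$ with $\Sigma_0 \slicePreceq \Gamma_0$ from condition (1) for $\mathcal{C}$, then confine them to $R$. Here I would exploit that $R$ is a region \emph{in} $\mathcal{C}$: writing $R$ as a union of bounded regions closed under finite unions, a single bounded region $\Diamond_{A,B}$ with $A,B \in \obj{\mathcal{C}}$ and $A \slicePreceq B$ contains both $x$ and $y$. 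Setting $\Gamma := \Gamma_0 \cap \Diamond_{A,B}$ keeps $y$, lands in $R$ and in $\mathcal{C}$ (condition (2)), and still satisfies $\Sigma_0 \slicePreceq \Gamma$ because $\Gamma \subseteq \Gamma_0 \subseteq \futuredom{\Sigma_0}$.

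The difficulty is the lower witness. The obvious candidate $\Sigma := \Sigma_0 \cap \Diamond_{A,B}$ contains $x$ and lies in $\mathcal{C}\vert_R$, but it need not satisfy $\Sigma \slicePreceq \Gamma$: a causal path into $\Gamma$ that $\Sigma_0$ blocked at a point lying \emph{outside} $R$ may, after restriction, traverse $\Diamond_{A,B}$ unblocked. Repairing this requires transferring the lost blocking onto the bounding slice $A$ — using $\Diamond_{A,B} \subseteq \futuredom{A}$, which I would establish from $A \slicePreceq B$ together with the antichain property of $A$ — and then assembling the surviving part of $\Sigma_0$ with the relevant part of $A$ into a \emph{single} slice of $\mathcal{C}\vert_R$, rather than a mere set-theoretic union. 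Controlling this assembly, so that the result is genuinely an object of $\mathcal{C}$ and still causally dominates $\Gamma$, is where the full force of convexity and of the region-in-$\mathcal{C}$ decomposition must be brought to bear; I expect this to be the delicate heart of the argument, and the same mechanism is what would be needed to upgrade the one-directional order comparison of the first paragraph into an equality of the $\slicePreceq$ relations in $R$ and in $\Omega$.
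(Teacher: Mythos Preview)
Your handling of conditions (2) and (3) is essentially identical to the paper's: convexity of $R$ keeps $\Diamond_{\Sigma,\Gamma}$ inside $R$, and unions of slices contained in $R$ stay in $R$. The paper does not spell out the ``routine compatibilities'' you discuss in your first paragraph (agreement of diamonds, space-like separation, and the $\slicePreceq$ relation when passing from $\Omega$ to $R$); it works implicitly with the ambient $\sliceCat{\Omega}$ throughout.

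For condition (1) the paper follows precisely the move you call the ``natural attempt'': it uses closure under finite unions to find a single bounded diamond $\Diamond_{\Sigma_{\lambda_{x,y}},\Gamma_{\lambda_{x,y}}} \subseteq R$ containing both $x$ and $y$, takes ambient witnesses $\Delta_x \slicePreceq \Delta_y$ from condition (1) for $\mathcal{C}$, and then intersects both with that diamond. At this point the paper simply asserts that the two restrictions ``satisfy requirement (1) for $\mathcal{C}\vert_R$'', with no further argument. In other words, the step you single out as the \emph{main obstacle}---that restricting the lower witness $\Delta_x$ to the diamond may destroy the relation $\slicePreceq$---is not addressed in the paper's proof at all.

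Your concern is not idle. Take $\Omega=\{w,a_1,a_2,x,y,b\}$ with $a_1<x<y<b$ and $w<a_2<y$, and no other relations. With $A=\{a_1,a_2\}$, $B=\{b\}$ one has $A\slicePreceq B$ and $D=\Diamond_{A,B}=\Omega\setminus\{w\}$. The choice $\Delta_x=\{x,w\}$, $\Delta_y=\{y\}$ is a legitimate instance of condition (1) for $\sliceCat{\Omega}$, yet $\Delta_x\cap D=\{x\}$ and the path $a_2<y$ (in $D$) or $w<a_2<y$ (in $\Omega$) avoids $\{x\}$; so $\{x\}\not\slicePreceq\{y\}$ in either sense. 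The proposition is still true here---$\{x,a_2\}\slicePreceq\{y\}$ works---but the paper's argument, which restricts an \emph{arbitrary} ambient witness, does not produce it. Your instinct to repair the lower witness using the bounding slice $A$ is therefore the genuinely missing ingredient, and you have been more careful than the paper at exactly this point.
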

\begin{proof}
    If $R = \Diamond_{\Sigma, \Gamma}$ is a \emph{bounded} region in $\mathcal{C}$, then the statement is an immediate consequence of requirement (2) for categories of slices.
    Now assume that $R = \bigcup_{\lambda \in \Lambda} \Diamond_{\Sigma_\lambda, \Gamma_\lambda}$ is a union of bounded regions in $\mathcal{C}$.

    If $x \leq y$ are two events in $R$, then it must be that $x \in \Diamond_{\Sigma_{\lambda_x}, \Gamma_{\lambda_x}}$ and $y \in \Diamond_{\Sigma_{\lambda_y}, \Gamma_{\lambda_y}}$ for some $\lambda_x, \lambda_y \in \Lambda$: closure under union of the family $(\Diamond_{\Sigma_\lambda, \Gamma_\lambda})_{\lambda \in \Lambda}$ then guarantees that there exists some $\lambda_{x,y} \in \Lambda$ with $x, y \in \Diamond_{\Sigma_{\lambda_{x,y}}, \Gamma_{\lambda_{x,y}}}$.
    Because $\mathcal{C}$ is a category of slices, we can find two slices $\Delta_x \slicePreceq \Delta_y$ in $\mathcal{C}$ such that $x \in \Delta_x$ and $y \in \Delta_y$.
    Then the restrictions $(\Delta_x \cap \Diamond_{\Sigma_{\lambda_{x,y}}, \Gamma_{\lambda_{x,y}}})$ and $(\Delta_y \cap \Diamond_{\Sigma_{\lambda_{x,y}}, \Gamma_{\lambda_{x,y}}})$ satisfy requirement (1) for $\mathcal{C}\vert_{R}$ to be a category of slices.

    If $\Sigma$, $\Gamma$ and $\Delta$ are three slices in $\mathcal{C}\vert_{R}$, then in particular the diamond $\Diamond_{\Sigma, \Gamma}$ is a subset of $R$ (the latter is a region) and so is the intersection $\Delta \cap \Diamond_{\Sigma, \Gamma}$, which exists in $\mathcal{C}$ because the latter is a category of slices.
    Hence requirement (2) for $\mathcal{C}\vert_{R}$ to be a category of slices is satisfied.

    Requirement (3) for $\mathcal{C}\vert_{R}$ to be a category of slices is satisfied, because if $\Sigma, \Gamma \subseteq R$ then also $\Sigma \otimes \Gamma \subseteq R$ whenever the latter is defined.
\hfill$\square$\end{proof}

Given a causal field theory $\Psi: \mathcal{C} \rightarrow \mathcal{D}$, the restrictions $\Psi\vert_{R}: \mathcal{C}\vert_{R} \rightarrow \mathcal{D}$ are again causal field theories.
To match the spirit of AQFT, we need two more ingredients: the definition of a space of states $\states{\Psi}{R}$ over a region $R$ and the definition of restrictions $\states{\Psi}{R} \rightarrow \states{\Psi}{R'}$ between spaces of states associated with inclusions $R' \subseteq R$ of regions.

\begin{definition}\label{definition:states-on-regions}
    Given a region $R$ in a category of slices $\mathcal{C}$, the \emph{space of states} $\states{\Psi}{R}$ over the region is defined to be the set comprising all families $\rho$ of states over the slices in $\mathcal{C}\vert_{R}$ which are stable under the action of $\Psi$, i.e. comprising all the families
    \begin{equation}
        \rho
        \in \hspace{-5mm}
        \prod_{
            \Delta \in \obj{\mathcal{C}\vert_{R}}
        } \hspace{-5mm} \states{\mathcal{D}}{\Psi(\Delta)}
    \end{equation}
    such that for all $\Delta, \Delta' \in \obj{\mathcal{C}\vert_{R}}$ with $\Delta \slicePreceq \Delta'$ the following condition is satisfied:
    \begin{equation}
        \Psi(\Delta \slicePreceq \Delta') \circ \rho_{\Delta} = \rho_{\Delta'}
    \end{equation}
    By $\states{\mathcal{D}}{\Psi(\Delta)}$ we have denoted the states on the object $\Psi(\Delta)$ of the symmetric monoidal category $\mathcal{D}$, i.e. the homset $\Hom{\mathcal{D}}{I}{\Psi(\Delta)}$ where $I$ is the monoidal unit of $\mathcal{D}$.
\end{definition}
\begin{proposition}\label{proposition:presheaf-of-states-on-regions}
    Given a causal field theory $\Psi: \mathcal{C} \rightarrow \mathcal{D}$, we can construct a presheaf $\statesPresheaf{\Psi}: \regionCat{\mathcal{C}}^{op} \rightarrow \setCat$ by associating each region $R \in \obj{\regionCat{\mathcal{C}}}$ to the space of states $\states{\Psi}{R}$ over the region, and each inclusion $i: R' \subseteq R$ to the restriction function $\states{\Psi}{R} \rightarrow \states{\Psi}{R'}$ defined by sending a family $\rho \in \states{\Psi}{R}$ to the family $\statesPresheaf{\Psi}(i)(\rho) \in \states{\Psi}{R'}$ given as follows:
    \begin{equation}
        \statesPresheaf{\Psi}(i)(\rho)_{\Delta'} = \rho_{i(\Delta')}
    \end{equation}
    We refer to $\statesPresheaf{\Psi}$ as the \emph{presheaf of states} over regions of $\,\mathcal{C}$.
\end{proposition}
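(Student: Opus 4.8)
The plan is to verify that $\statesPresheaf{\Psi}$ satisfies the two requirements of a contravariant functor on the poset $\regionCat{\mathcal{C}}$: first, that each restriction map $\states{\Psi}{R} \rightarrow \states{\Psi}{R'}$ is \emph{well-defined}, i.e. that it genuinely lands in $\states{\Psi}{R'}$; and second, that the assignment is \emph{functorial}, i.e. that it respects identities and (contravariantly) composition. Since $\regionCat{\mathcal{C}}$ is a poset, between any two regions there is at most one inclusion, so there are no coherence conditions to check beyond these, and the verification reduces to the three short arguments below.

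The key preliminary observation, which I would state at the outset, is that for an inclusion $i: R' \subseteq R$ every slice $\Delta' \in \obj{\mathcal{C}\vert_{R'}}$ satisfies $\Delta' \subseteq R' \subseteq R$ and is therefore also an object of $\mathcal{C}\vert_{R}$; the map $i$ acts on slices simply as this inclusion of index sets, $i(\Delta') = \Delta'$, so the restriction function merely discards the components of $\rho$ indexed by slices lying in $R$ but not in $R'$. In particular $\Psi(i(\Delta')) = \Psi(\Delta')$, so each retained component $\rho_{i(\Delta')}$ already lives in $\states{\mathcal{D}}{\Psi(\Delta')} = \Hom{\mathcal{D}}{I}{\Psi(\Delta')}$, which is exactly the state-type required for membership in $\states{\Psi}{R'}$.

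The main step is the stability half of well-definedness: I would show that for any $\Delta' \slicePreceq \Delta''$ with $\Delta', \Delta'' \in \obj{\mathcal{C}\vert_{R'}}$ the restricted family satisfies $\Psi(\Delta' \slicePreceq \Delta'') \circ \rho_{\Delta'} = \rho_{\Delta''}$. This is where I expect the only (mild) subtlety to lie, but it resolves immediately once one notes that the relation $\Delta' \slicePreceq \Delta''$ is the condition $\Delta'' \subseteq \futuredom{\Delta'}$ in $\Omega$ and is therefore independent of the ambient region; hence the same pair also satisfies $\Delta' \slicePreceq \Delta''$ in $\mathcal{C}\vert_{R}$, the morphism is literally the same in the full subcategory $\sliceCat{\Omega}$, and $\Psi$ sends it to the same morphism of $\mathcal{D}$. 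The required equation is thus just an instance of the compatibility condition already enjoyed by $\rho \in \states{\Psi}{R}$, and so is inherited verbatim; this establishes $\statesPresheaf{\Psi}(i)(\rho) \in \states{\Psi}{R'}$.

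Finally I would dispatch functoriality. For the identity inclusion $\id{R}$ one has $\statesPresheaf{\Psi}(\id{R})(\rho)_{\Delta} = \rho_{\id{R}(\Delta)} = \rho_{\Delta}$, so $\statesPresheaf{\Psi}(\id{R})$ is the identity on $\states{\Psi}{R}$. For a composite of inclusions $j: R'' \subseteq R'$ and $i: R' \subseteq R$, evaluating both sides componentwise gives $\statesPresheaf{\Psi}(i \circ j)(\rho)_{\Delta''} = \rho_{i(j(\Delta''))} = \statesPresheaf{\Psi}(j)\big(\statesPresheaf{\Psi}(i)(\rho)\big)_{\Delta''}$, which is precisely the contravariant composition law $\statesPresheaf{\Psi}(i \circ j) = \statesPresheaf{\Psi}(j) \circ \statesPresheaf{\Psi}(i)$. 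This completes the verification that $\statesPresheaf{\Psi}$ is a presheaf over $\regionCat{\mathcal{C}}$.
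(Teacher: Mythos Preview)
Your proof is correct and follows essentially the same approach as the paper, which also verifies functoriality componentwise via the identities $\rho_{i(\Delta)} = \rho_\Delta$ for $i = \id{R}$ and $\rho_{i(j(\Delta''))}$ for composites. In fact you are slightly more thorough: the paper declares that ``the only thing to show is functoriality'' and omits the well-definedness check (that the restricted family still satisfies the $\Psi$-stability condition), which you spell out explicitly.
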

\begin{proof}
    The only thing to show is functoriality of $\statesPresheaf{\Psi}$.
    If $i = \id{R}: R \subseteq R$ is the identity on a region $R$, then we have:
    \begin{equation}
        \statesPresheaf{\Psi}(i)(\rho)_{\Delta} = \rho_{i(\Delta)} = \rho_{\Delta}
    \end{equation}
    i.e. $\statesPresheaf{\Psi}(i) = \id{\states{\Psi}{R}}$ is the identity on the space of states over the region.
    If now $j: R'' \subseteq R'$ and $i: R' \subseteq R$, then $i \circ j: R'' \subseteq R$ and we have:
    \begin{equation}
        \statesPresheaf{\Psi}(j)\left(\statesPresheaf{\Psi}(i)(\rho)\right)_{\Delta''}
        =
        \statesPresheaf{\Psi}(i)(\rho)_{j(\Delta'')}
        =
        \rho_{i(j(\Delta''))}
        =
        \statesPresheaf{\Psi}(i \circ j)(\rho)_{\Delta''}
    \end{equation}
    Hence $\statesPresheaf{\Psi}$ is a presheaf $\statesPresheaf{\Psi}: \regionCat{\mathcal{C}}^{op} \rightarrow \setCat$.
\hfill$\square$\end{proof}

\begin{definition}\label{definition:global-state}
    A \emph{global state} $\rho$ for a causal field theory $\Psi: \mathcal{C} \rightarrow \mathcal{D}$ is a global compatible family for $\statesPresheaf{\Psi}$, i.e. a family $\rho = \left(\rho^{(R)}\right)_{R \in \regionCat{\mathcal{C}}}$ such that $\statesPresheaf{\Psi}(i)(\rho^{(R)}) = \rho^{(R')}$ for all inclusions $i: R' \subseteq R$ in $\regionCat{\mathcal{C}}$.
    We refer to the set of all global states as the \emph{space of global states}.
\end{definition}

\begin{remark}\label{remark:global-states-notation}
    If $\Omega$ is a region in $\mathcal{C}$, i.e. if $\Omega \in \regionCat{\mathcal{C}}$, then the states in $\states{\Psi}{\Omega}$ ($\Omega$ as a region) are in bijection with the global states as follows:
    \begin{equation}
        \begin{cases}
        \rho \in \states{\Psi}{\Omega}\mapsto \left(\statesPresheaf{\Psi}(R \subseteq \Omega)(\rho)\right)_{R \in \regionCat{\mathcal{C}}} \\
        (\rho^{(R)})_{R \in \regionCat{\mathcal{C}}} \text{ \textnormal{global state}} \mapsto \rho^{(\Omega)}
        \end{cases}
    \end{equation}
    Because of this, we consistently adopt the notation $\states{\Psi}{\Omega}$ to denote the space of global states. If $R$ is a region in $\mathcal{C}$, we also adopt the notation $\statesPresheaf{\Psi}(R \subseteq \Omega)$ for the map $\states{\Psi}{\Omega} \rightarrow \states{\Psi}{R}$ sending a global state $\rho = (\rho^{(R)})_{R \in \regionCat{\mathcal{C}}}$ to its component $\rho^{(R)}$ over the region $R$.
    To unify notation, we will also adopt $\rho_\Sigma$ to denote $(\rho^{(R)})_\Sigma$, taking the same value for any region $R$ containing $\Sigma$ (e.g. for $R = \Sigma$).
\end{remark}

\begin{remark}\label{remark:enrichment}
    If the field category $\mathcal{D}$ is suitably enriched (e.g. in a category with all limits), then a natural choice is for the the space of states to be defined by a presheaf valued in the enrichment category.
    For example, quantum theory is enriched over positive cones, i.e. $\reals^+$-modules, and the $\reals^+$-linear structure of states in quantum theory extends to a $\reals^+$-linear structure on the spaces of states of causal field theories having quantum theory as their field category.
    We will not consider such enrichment in this work, though all constructions we present can be readily extended to such a setting.
\end{remark}

Spaces of states according to Definition~\ref{definition:states-on-regions} encode a lot of redundant information, because we don't want to look into the specific structure of regions. However, there are certain special cases in which an equivalent description of the space of states over a region can be given.

To start with, consider consider two slices $\Sigma \slicePreceq \Gamma$ and note that the state on any slice $\Delta \subseteq \Diamond_{\Sigma, \Gamma}$ in a bounded region $\Diamond_{\Sigma, \Gamma}$ is uniquely determined by applying $\Psi(\Sigma \slicePreceq \Delta)$ to the state on $\Sigma$:
\begin{equation}
    \rho_\Delta = \Psi(\Sigma \slicePreceq \Delta)(\rho_\Sigma)
\end{equation}
This is, for example, the case for all bounded regions between Cauchy slices in a category of slices $\cauchySliceCat{\mathcal{F}}$ generated by some foliation $\mathcal{F}$.
If the foliation $\mathcal{F}$ has a minimum $\Sigma_{0}$---an \emph{initial Cauchy slice}---then any global state $\rho \in \states{\Psi}{\Omega}$ is entirely determined by its component $\rho_{\Sigma_{0}}$ over the initial slice $\Sigma_0$:
\begin{equation}
    \rho_\Delta = \Psi(\Sigma_{0} \slicePreceq \Delta) \circ \rho_{\Sigma_{0}}
\end{equation}
for any $\Delta \in \mathcal{F}$ and any region $R$ in $\cauchySliceCat{\mathcal{F}}$ such that $\Delta \subseteq R$. This extends to all slices in $\cauchySliceCat{\mathcal{F}}$ by restriction.

Inspired by Relativity, we would like the state on \emph{any} Cauchy slice in the foliation to determine the global state, not only that on an initial Cauchy slice (which may not exist).
For this to happen, we need to strengthen our requirements on the causal field theory, which needs to be \emph{causally reversible}.

\begin{definition}\label{definition:reverse-causal-order}
    Let $\Omega$ be any causal order. By the \emph{causal reverse} of $\,\Omega$ we mean the causal order $\Omega^{rev}$ on the same events as $\Omega$ and such that $x \leq y$ in $\Omega^{rev}$ if and only if $x \geq y$ in $\Omega$.
\end{definition}

\begin{definition}\label{definition:reversible-category-of-slices}
    A category of slices $\mathcal{C}$ on a causal order $\Omega$ is said to be \emph{causally reversible} if the full sub-category of $\sliceCat{\Omega^{rev}}$ spanned by $\obj{\mathcal{C}}$ is a category of slices on the causal reverse $\Omega^{rev}$.
    If this is the case, we write $\mathcal{C}^{rev}$ for said category of slices over $\Omega^{rev}$ and refer to it as the \emph{causal reverse} of $\mathcal{C}$.
    We write $\stackrel{rev}{\slicePreceq}$ for the morphisms of $\mathcal{C}^{rev}$.
\end{definition}

\begin{definition}\label{definition:reversible-cft}
    Let $\Psi: \mathcal{C} \rightarrow \mathcal{D}$ be a causal field theory on a causal order $\Omega$. If $\mathcal{C}$ is causally reversible, a \emph{causal reversal} of $\Psi$ is a causal field theory $\Phi: \mathcal{C}^{rev} \rightarrow \mathcal{D}$ such that:
    \begin{enumerate}
        \item[(1)] the functors $\Psi$ and $\Phi$ agree on objects, i.e. for all $\Sigma \in \obj{\mathcal{C}}$ we have that $\Psi(\Sigma) = \Phi(\Sigma)$;
        \item[(2)] whenever we have two chains of alternating morphisms in $\mathcal{C}$ and $\mathcal{C}^{rev}$ which start and end at the same slices $\Sigma, \Gamma$, say in the form
        \begin{align}
            \Sigma \slicePreceq \Delta_1 \stackrel{rev}{\slicePreceq} \Delta_2 \slicePreceq ... \Delta_{2n} \slicePreceq \Gamma \nonumber\\
            \Sigma \slicePreceq \Delta'_1 \stackrel{rev}{\slicePreceq} \Delta'_2 \slicePreceq ... \Delta'_{2m} \slicePreceq \Gamma
        \end{align}
        for some $n,m \geq 0$, the composition of the images of the morphisms under $\Psi$ and $\Phi$ always yield the same morphism $\Psi(\Sigma) \rightarrow \Psi(\Gamma)$:
        \begin{align}
            &\Psi(\Delta_{2n} \slicePreceq \Gamma)
            \circ
            ...
            \circ
            \Phi(\Delta_1 \stackrel{rev}{\slicePreceq} \Delta_2)
            \circ
            \Psi(\Sigma \slicePreceq \Delta_1)\nonumber\\
            =
            &\Psi(\Delta'_{2m} \slicePreceq \Gamma)
            \circ
            ...
            \circ
            \Phi(\Delta'_1 \stackrel{rev}{\slicePreceq} \Delta'_2)
            \circ
            \Psi(\Sigma \slicePreceq \Delta'_1)
        \end{align}
    \end{enumerate}
    We say that $\Psi: \mathcal{C} \rightarrow \mathcal{D}$ is \emph{causally reversible}---or simply \emph{reversible}---if $\mathcal{C}$ is causally reversible and $\Psi$ admits a causal reversal.
\end{definition}

\begin{proposition}\label{proposition:foliations-causally-reversible-cft}
    Let $\cauchySliceCat{\mathcal{F}}$ be the category of slices on a causal order $\Omega$ generated by some foliation $\mathcal{F}$.
    Then $\cauchySliceCat{\mathcal{F}}$ is always causally reversible and for any two Cauchy slices $\Delta, \Sigma$ we have that $\Delta \slicePreceq \Sigma$ if and only if $\Sigma \stackrel{rev}{\slicePreceq} \Delta$.
    Furthermore, if a causal field theory $\Psi: \cauchySliceCat{\mathcal{F}} \rightarrow \mathcal{D}$ is reversible, then a global state $\rho$ is entirely determined by the state $\rho_\Sigma$ on \emph{any} Cauchy slice $\Sigma \in \mathcal{F}$ as follows:
    \begin{equation}
        \label{equation:foliations-causally-reversible-cft}
        \rho_\Delta =
        \begin{cases}
            \Psi(\Sigma \slicePreceq \Delta) \circ \rho_\Sigma & \text{ if } \Sigma \slicePreceq \Delta\\
            \Phi(\Sigma \stackrel{rev}{\slicePreceq} \Delta) \circ \rho_\Sigma & \text{ if } \Delta \slicePreceq \Sigma
        \end{cases}
    \end{equation}
    where $\Phi: \cauchySliceCat{\mathcal{F}}^{rev} \rightarrow \mathcal{D}$ is any causal reversal of $\Psi$.
\end{proposition}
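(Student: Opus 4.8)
The plan is to prove the three assertions in turn, establishing first the purely order-theoretic equivalence, since both the reversibility of $\cauchySliceCat{\mathcal{F}}$ and the reconstruction formula rest on it. I begin by recording that the combinatorial data is reversal-invariant: a subset is an antichain for $\Omega$ exactly when it is one for $\Omega^{rev}$, so the two causal orders have literally the same slices, and the maximal causal paths $-\infty \rightsquigarrow +\infty$ of $\Omega^{rev}$ are precisely the reverses of those of $\Omega$, whence a slice is Cauchy in $\Omega$ iff it is Cauchy in $\Omega^{rev}$. Unwinding Definition~\ref{definition:domain-of-dependence} then gives the key identity that the future domain of dependence of $\Sigma$ computed in $\Omega^{rev}$ coincides with $\pastdom{\Sigma}$ in $\Omega$, because a path $-\infty \rightsquigarrow x$ in $\Omega^{rev}$ is exactly a path $x \rightsquigarrow +\infty$ in $\Omega$. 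Consequently $\Sigma \stackrel{rev}{\slicePreceq} \Delta$ holds iff $\Delta \subseteq \pastdom{\Sigma}$, so the equivalence to prove reduces to the symmetric statement that, for Cauchy slices, $\Sigma \subseteq \futuredom{\Delta}$ iff $\Delta \subseteq \pastdom{\Sigma}$.

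For this equivalence I would argue by a path-chasing lemma. Assuming $\Sigma \subseteq \futuredom{\Delta}$, pick $y \in \Delta$ and any path $\gamma : y \rightsquigarrow +\infty$, and extend it to a maximal chain $\tilde\gamma : -\infty \rightsquigarrow +\infty$ whose portion above $y$ is exactly $\gamma$. Cauchyness gives the unique intersections $\tilde\gamma \cap \Delta = \{y\}$ and $\tilde\gamma \cap \Sigma = \{s\}$; since $s \in \futuredom{\Delta}$, the lower portion $\tilde\gamma \cap \past{s}$ must meet $\Delta$, forcing $y \leq s$ and hence $s \in \gamma$, so that $\gamma \cap \Sigma \neq \emptyset$. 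As $\gamma$ was arbitrary this yields $y \in \pastdom{\Sigma}$, i.e. $\Delta \subseteq \pastdom{\Sigma}$, and the converse is the mirror-image argument. With the equivalence in hand, causal reversibility follows from Proposition~\ref{proposition:cauchy-slice-category}: since $\mathcal{F}$ is a set of Cauchy slices of $\Omega^{rev}$ which is totally ordered by $\stackrel{rev}{\slicePreceq}$ (the reverse of its $\slicePreceq$-order), covering, and pairwise disjoint, it is a foliation of $\Omega^{rev}$, so the full subcategory on $\obj{\cauchySliceCat{\mathcal{F}}}$ is a category of slices on $\Omega^{rev}$ in the sense of Definition~\ref{definition:reversible-category-of-slices}.

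Finally, for the reconstruction formula I would fix a Cauchy slice $\Sigma \in \mathcal{F}$ and a global state $\rho$, and treat first the case where $\Delta$ is also a Cauchy slice of $\mathcal{F}$, so that total ordering forces $\Sigma \slicePreceq \Delta$ or $\Delta \slicePreceq \Sigma$. If $\Sigma \slicePreceq \Delta$, the compatibility condition of Definition~\ref{definition:global-state} gives $\rho_\Delta = \Psi(\Sigma \slicePreceq \Delta) \circ \rho_\Sigma$ directly. If instead $\Delta \slicePreceq \Sigma$, compatibility gives $\rho_\Sigma = \Psi(\Delta \slicePreceq \Sigma) \circ \rho_\Delta$, and I invert this using $\Phi$: applying condition (2) of Definition~\ref{definition:reversible-cft} to the two chains $\Delta \slicePreceq \Delta$ and $\Delta \slicePreceq \Sigma \stackrel{rev}{\slicePreceq} \Delta \slicePreceq \Delta$ from $\Delta$ to $\Delta$ yields
\begin{equation}
    \Phi(\Sigma \stackrel{rev}{\slicePreceq} \Delta) \circ \Psi(\Delta \slicePreceq \Sigma) = \id{\Psi(\Delta)},
\end{equation}
so composing both sides on the right with $\rho_\Delta$ gives $\Phi(\Sigma \stackrel{rev}{\slicePreceq} \Delta) \circ \rho_\Sigma = \rho_\Delta$, as claimed. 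The passage to a general slice $\Delta \in \obj{\cauchySliceCat{\mathcal{F}}}$ is then immediate, since such a $\Delta$ sits inside some Cauchy slice $\Gamma \in \mathcal{F}$, whence $\Gamma \slicePreceq \Delta$ and $\rho_\Delta = \Psi(\Gamma \slicePreceq \Delta) \circ \rho_\Gamma$ by compatibility, while $\rho_\Gamma$ is already determined by $\rho_\Sigma$ from the Cauchy case.

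I expect the main obstacle to be this inversion step: the content of reversibility is precisely that $\Phi$ supplies a left inverse to the forward evolution, and extracting it cleanly requires choosing the right closed loop of alternating $\slicePreceq$/$\stackrel{rev}{\slicePreceq}$ morphisms and padding it with identities so that it matches the exact shape of the chains in Definition~\ref{definition:reversible-cft}. The path-chasing equivalence of the second paragraph is the other delicate point, as it is the only step that genuinely exploits the Cauchy property of \emph{both} slices rather than mere bookkeeping of the definitions.
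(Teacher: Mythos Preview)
Your proof is correct and follows essentially the same line as the paper's: both hinge on the equivalence, for Cauchy slices, of $\Sigma \subseteq \futuredom{\Delta}$ and $\Delta \subseteq \pastdom{\Sigma}$, then invert forward evolution via the causal reversal condition and finish by restriction to sub-slices. The paper simply asserts the key equivalence as its ``main observation'' and deduces reversibility in one line, whereas you supply the path-chasing argument and verify explicitly that $\mathcal{F}$ is a foliation of $\Omega^{rev}$ so as to invoke Proposition~\ref{proposition:cauchy-slice-category}; your inversion step is also spelled out more carefully (and in fact more accurately---the paper's displayed chain has a harmless typo in the identity's subscript).
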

\begin{proof}
    The main observation behind this result is as follows: if $\Sigma, \Delta$ are two Cauchy slices, then the conditions $\Delta \subseteq \futuredom{\Sigma}$ and $\Sigma \subseteq \pastdom{\Delta}$ are equivalent.
    Hence $\cauchySliceCat{\mathcal{F}}$ is always causally reversible and $\Delta \slicePreceq \Sigma$ if and only if $\Sigma \stackrel{rev}{\slicePreceq} \Delta$ for any two Cauchy slices $\Delta, \Sigma$.

    Now let $\Psi$ be causally reversible, let $\Sigma \in \mathcal{F}$ be a Cauchy slice in the foliation and consider any global state $\rho$.
    If $\Sigma \slicePreceq \Delta$ for some other Cauchy slice $\Delta \in \mathcal{F}$, then the definition of a global state implies that $\rho_\Delta = \Psi(\Sigma \slicePreceq \Delta) \circ \rho_\Sigma$.
    If instead $\Delta \slicePreceq \Sigma$, then $\Sigma \stackrel{rev}{\slicePreceq} \Delta$ and the definition of a global state implies that $\rho_\Sigma = \Psi(\Delta \slicePreceq \Sigma) \circ \rho_\Delta$. But the definition of a causal reverse also implies that:
    \begin{equation}
        \Phi(\Sigma \stackrel{rev}{\slicePreceq} \Delta) \circ \rho_\Sigma
        =
        \Phi(\Sigma \stackrel{rev}{\slicePreceq} \Delta) \circ \Psi(\Delta \slicePreceq \Sigma) \circ \rho_\Delta
        =
        \Psi(\Sigma \slicePreceq \Sigma)  \circ \rho_\Delta
        =
        \id{\Psi(\Sigma)}  \circ \rho_\Delta
        =
        \rho_\Delta
    \end{equation}
    Hence the value $\rho_\Sigma$ completely determines the global state $\rho$ (since the value on all other slices in $\cauchySliceCat{\mathcal{F}}$ is determined by restriction from the value on a corresponding Cauchy slice).
\hfill$\square$\end{proof}

It is an easy check that not only the global states $\rho \in \states{\Psi}{\Omega}$ are determined---under the conditions of Proposition~\ref{proposition:foliations-causally-reversible-cft}---by their component $\rho_\Sigma \in \states{\mathcal{D}}{\Psi(\Sigma)}$ over any Cauchy slice $\Sigma$ in the foliation, but also that Equation~\ref{equation:foliations-causally-reversible-cft} can be used---under the same conditions---to construct a global state $\rho \in \states{\Psi}{\Omega}$ from a state $\rho_\Sigma \in \states{\mathcal{D}}{\Psi(\Sigma)}$ on any Cauchy slice $\Sigma$ in the foliation.

Before concluding this Section, we would like to remark that a succinct description of spaces of states over regions can be obtained in settings much more general than those of foliations: for example, in all those cases where the every region admits a suitable Cauchy slice and the causal field theory is reversible.
The careful formulation of this more general setting is key to the further development of the connection between causal field theory and AQFT and it is left to future work.
% Some conceptual imagery about spaces of states over regions can be found in Figure~\ref{figure:states-over-regions}.

% \begin{figure}[h]
%     \begin{center}
%         \note{***}
%     \end{center}
%     \caption{Left: conceptual drawing for the space of states over a bounded region. Centre: when $\Sigma \slicePreceq \Gamma$, the state on $\Sigma$ determines the state on the entirety of $\Diamond_{\Sigma, \Gamma}$. Right: the state on any Cauchy slice in a foliation determines the entire global state if the causal field theory $\Psi$ is causally reversible.}
%     \label{figure:states-over-regions}
% \end{figure}

\section{Connection to quantum cellular automata}
\label{section:connection-qca}

The idea of a cellular automaton was first introduced by von Neumann, aimed at designing a self replicating machine \cite{vonNeumann1966automata}. A \emph{Cellular Automaton} (CA) over some finite alphabet $A$ has its state stored as a $d$-dimensional lattice of values in $A$, i.e. as a function $\psi: \integers^d \rightarrow A$. The state is updated at discrete time steps, each step updated as $\psi^{(t+1)} := F(\psi^{(t)})$ according to some fixed function $F: (\integers^d \rightarrow A) \rightarrow (\integers^d \rightarrow A)$.
The function $F$ acts \emph{locally} and \emph{homogeneously}: there is some fixed finite subset $\mathcal{N} \subset \integers^d$ (typically a neighbourhood of $\underline{0} \in \integers^d$) and some function $f: \mathcal{N} \rightarrow A$ such that the value of each lattice site $\underline{x}$ at time step $t+1$ only depends on the finitely many values in the subset $\underline{x} + \mathcal{N}$ at time $t$:

\begin{equation}
    F(\psi) := \underline{x} \mapsto f(\psi\vert_{\underline{x} + \mathcal{N}})
\end{equation}

A \emph{Quantum Cellular Automaton} (QCA) is a generalization of a CA where the lattice states $\psi: \integers^d \rightarrow A$ are replaces by (pure) states in the tensor product of Hilbert spaces $\bigotimes_{\underline{x} \in \integers^d} \mathcal{H}_x$ (all $\mathcal{H}_{\underline{x}}$ finite-dimensional and isomorphic) and the function $F$ is replaced by a unitary $U: \bigotimes_{\underline{x} \in \integers^d} \mathcal{H}_{\underline{x}} \rightarrow \bigotimes_{\underline{x} \in \integers^d} \mathcal{H}_{\underline{x}}$, with requirements of locality and homogeneity.

\begin{remark}\label{remark:infinite-tensor-product}
    There are several slightly different formulation of the infinite tensor product above that can be used, each with its own advantages and disadvantages: though it is not going to be  a concern for this work, the authors are partial to the construction by von Neumann \cite{vonNeumann1939tensor}.
\end{remark}

An early formulation of the notion of QCA is due to Richard Feynman, in the context of simulations of physics using quantum computers \cite{feynmann1982simulating}. More recent work on quantum information and quantum causality has shown that the evolution of certain free quantum fields can be recovered as the continuous limit of certain quantum cellular automata (cf. \cite{dariano2016automata,arrighi2019automata} and references therein).
In the final section of this work, we show that our framework is well-suited to capture notions of QCA such as those appearing in the literature. Specifically, our construction encompasses and greatly generalises that presented in \cite{arrighi2019automata}.

\subsection{Causal cellular automata}

The first requirement in the definition of a QCA is that of \emph{homogeneity}---called ``translation invariance'' in \cite{arrighi2019automata}---i.e. the requirement that the automaton act the same way at all points of spacetime.
Because presentations of QCAs are usually given in terms of discrete updates of states on a lattice by means of a unitary $U$, only the requirement of homogeneity \emph{in space} is usually mentioned.
However, such presentations also have homogeneity in time as an implicit requirement, namely in the assumption that the same unitary $U$ be used to update the state at all times.

Instead of updating the state time-step by time-step in a compositional fashion, our formulation of quantum cellular automata will see the entirety of spacetime at once, with states over slices and regions recovered in a decompositional approach.
Nevertheless, the requirement of homogeneity for a QCA can still be formulated as a requirement of invariance under certain symmetries of spacetime, so we begin by formulating such a notion of invariance for causal field theories.

\begin{definition}\label{definition:causal-order-symmetries}
    A \emph{symmetry} on a causal order $\Omega$ is an action of a group $G$ on $\Omega$ by automorphisms of causal orders, i.e. a group homomorphism $G \rightarrow \Automs{\causOrdCat}{\Omega}$.
    If $\mathcal{C}$ is a category of slices on $\Omega$, a \emph{symmetry} on $\mathcal{C}$ is a symmetry on $\Omega$ which extends to an action on $\mathcal{C}$ by partially monoidal functors, i.e. one such that the following conditions are satisfied:
    \begin{itemize}
        \item[(1)] for all $g \in G$, if $\Sigma \in \obj{\mathcal{C}}$ then $g(\Sigma) \in \obj{\mathcal{C}}$;
        \item[(2)] for all $g \in G$ and all $\Sigma, \Gamma \in \obj{\mathcal{C}}$, if $\Sigma \slicePreceq \Gamma$ then $g(\Sigma) \slicePreceq g(\Gamma)$;
        \item[(3)] for all $g \in G$ and all $\Sigma, \Gamma \in \obj{\mathcal{C}}$, if $\Sigma \otimes \Gamma$ is defined in $\mathcal{C}$ then $g(\Sigma \otimes \Gamma) = g(\Sigma) \otimes g(\Gamma)$ is also defined in $\mathcal{C}$.
    \end{itemize}
    Note, for all $g \in G$, that $g(\emptyset) = \emptyset$ and that $g(\Sigma)$ is automatically a slice whenever $\Sigma$ is a slice.
\end{definition}

\begin{definition}\label{definition:symmetry-invariant-cft}
    Let $\mathcal{C}$ is a category of slices with a symmetry action of a group $G$.
    A \emph{$G$-invariant} (or simply \emph{symmetry-invariant}) causal field theory on $\mathcal{C}$ is a causal field theory $\Psi: \mathcal{C} \rightarrow \mathcal{D}$ equipped with a family of natural isomorphisms $\Psi \stackrel{\alpha_g}{\Rightarrow} \Psi \circ g$ such that $\alpha_{h \cdot g} = \alpha_hg \circ \alpha_g$, where we have again identified elements $g \in G$ with their action as partially monoidal functors $g: \mathcal{C} \rightarrow \mathcal{C}$.
\end{definition}

\begin{remark}\label{remark:symmetry-invariant-cft}
    The spirit behind the definition of symmetry-invariant causal field theories is that the functors $\Psi$ (sending slices $\mapsto$ fields) and $\Psi \circ g$ (sending slices $\mapsto$ $g$-translated slices $\mapsto$ fields) should be the same. However, we have remarked when first defining causal field theories that---be it for ease of physical interpretation or for conformity with existing literature on causal categories---it may sometimes be desirable that the images $\Psi(\Sigma)$ of different slices be different.
    Not being able to impose the equality $\Psi = \Psi \circ g$ in such a setting, the next best thing is to ask for natural isomorphism $\Psi \cong \Psi \circ g$.

    Because we are dealing with symmetries, however, it is sensible to require for the natural isomorphisms $\alpha_g$ themselves to respect the group structure.
    Again the first instinct might be to require something in the form $\alpha_{h \cdot g} = \alpha_{h} \circ \alpha_g$, but this expressions does not type-check: we have a natural transformation $\alpha_{h \cdot g}: \Psi \Rightarrow \Psi \circ h \circ g$, a natural transformation $\alpha_g: \Psi \Rightarrow \Psi \circ g$ and a natural transformation $\alpha_h: \Psi \Rightarrow \Psi \circ h$. In order to compose $\alpha_h$ and $\alpha_g$ we instead have to take the action of $\alpha_h$ \emph{translated} to $\Psi \circ g$:
    \begin{equation}
        \alpha_hg: \Psi \circ g \Rightarrow (\Psi \circ h) \circ g
    \end{equation}
    Explicitly, the natural transformation $\alpha_hg$ is defined by $(\alpha_hg)(\Sigma) := \alpha_h(g(\Sigma))$.
\end{remark}

The second requirement in the definition of a QCA is that of ``locality''.
When quantum cellular automata are considered in a relativistic context---e.g. as discrete models of quantum field theories---the requirement of locality is called \emph{causality}, as it is meant to capture the idea that the action of the automaton should respect the causal structure of spacetime (so that the state on a point $\underline{x}$ at time $t + \Delta t$ should not depend on the state at the previous time $t$ on points $\underline{y}$ which are ``too far away'', i.e. such that $(\underline{x}, t + \Delta t)$ and $(\underline{y}, t)$ are space-like separated).

In \cite{arrighi2019automata}, causality is formulated as the requirement that the output state of the automaton over a point $\underline{x}$ of the lattice at time $t+1$ only depend on the state over a finite neighbourhood $\underline{x} + \mathcal{N}$ at time $t$.
In our framework, on the other hand, causality is automatically enforced: the state over a slice never depends on the state on any other slice which is space-like separated from it.

\begin{remark}
    The causal order $\Omega$ which captures the causality requirement from \cite{arrighi2019automata} with finite neighbourhood $\mathcal{N} \subset \integers^d$ can be constructed by endowing the set $|\Omega| := \integers^d \times \integers$ with the reflexive-transitive closure of the relation $(\underline{y}, t) \leq (\underline{x}, t+1)$ for all times $t \in \integers$, for all points of the lattice $\underline{x} \in \integers^d$ and for all points $\underline{y} \in \underline{x} + \mathcal{N}$ in the neighbourhood of $\underline{x}$.
\end{remark}

The third and final requirement in the definition of a QCA is that of \emph{unitarity}.
In our framework, this is a problem for two (mostly unrelated) reasons.
\begin{itemize}
    \item Our formulation of causal field theories aims to be agnostic to the choice of process theory. On the other hand, unitarity is a strongly quantum-like feature, the formulation of which would require a significant amount of additional structure on the field category.
    \item The usual formulation of quantum cellular automata only considers global evolution, never directly dealing with restrictions---situations e.g. in which the state is evolved unitarily but part of the output state is discarded as environment. Our framework instead treats such restrictions as an integral part of evolution.
\end{itemize}
Luckily, unitarity \emph{per se} is not necessary from an abstract foundational standpoint: the real feature of interest is \emph{reversibility}, a feature of causal field theories which we have already explored.
For the sake of generality, we will not include reversibility in the definition below, leaving it as an explicit desideratum.

\begin{definition}\label{definition:qca}
    A \emph{Causal Cellular Automaton} (CCA) consists of the following ingredients.
    \begin{enumerate}
        \item[(1)] A foliation $\mathcal{F}$ on a causal order $\Omega$.
        \item[(2)] A category of Cauchy slices $\mathcal{C}$ such that each slice in $\mathcal{C}$ is a subset of some Cauchy slice in $\mathcal{F}$.
        \footnote{Each Cauchy slice $\Sigma$ in $\mathcal{F}$ is then automatically the union of all slices $\Delta \in \obj{\mathcal{C}}$ such that $\Delta \subseteq \Sigma$.}
        \item[(3)] A symmetry action of a group $G$ on $\mathcal{C}$, inducing---via the $G$-action on $\Omega$---a transitive action of $G$ on the Cauchy slices in the foliation $\mathcal{F}$.
        \item[(4)] A $G$-invariant causal field theory $\Psi: \cauchySliceCat{\mathcal{F}} \rightarrow \mathcal{D}$.
    \end{enumerate}
    A \emph{reversible} CCA is one where the causal field theory $\Psi$ is reversible.
\end{definition}

Definition~\ref{definition:qca} is much more general than the usual definition of a QCA and hence captures more sophisticated examples. However, its ingredients are directly analogous to those appearing in the definition of a QCA.
\begin{itemize}
    \item The foliation $\mathcal{F}$ on $\Omega$ generalises the discrete time steps in the definition of a QCA.
    \item The slices in $\mathcal{C}$ generalise the equal-time hyper-surfaces which support the state of a QCA at fixed time.
    \item The symmetry action of $G$ on $\cauchySliceCat{\mathcal{F}}$ and its transitivity on the foliation $\mathcal{F}$ generalise homogeneity in both space and in time of the lattices supporting a QCA.
    \item The $G$-invariance of the causal field theory $\Psi$ generalises both the translation symmetry in space and the time-translation symmetry of a QCA.
\end{itemize}

\subsection{Partitioned causal cellular automata}

We now proceed to construct a large family of examples of CCAs based on the \emph{partitioned QCAs} of \cite{arrighi2019automata}. In doing so, we generalise the scattering unitaries to arbitrary processes and allow for the definition of state restriction to non-Cauchy equal-time surfaces. We refer to the resulting CCA as \emph{partitioned CCA}.

\subsubsection{Causal order}
As our causal order $\Omega$ we consider the following subset of $(1+d)$-dimensional Minkowski spacetime (setting the constant $c$ for the speed of light to $c = \sqrt{d}$):
\begin{equation}
    \Omega := \suchthat{(t, \underline{x})}{t \in \integers, \underline{x} \in (t,...,t) + 2\integers^{d}}
\end{equation}
where $(t,...,t) + 2\integers^{d}$ is the set of all $\underline{x} \in \integers^{d}$ such that $x_i = \modclass{t}{2}$.
For $d = 1$ we get the $(1+1)$-dimensional diamond lattice discussed before. In general, the immediate causal predecessors of a point $(t, \underline{x})$ are the following $2^d$ points:
\begin{equation}
    \left(t-1, \underline{x} - \mathcal{N}\right) = \suchthat{(t-1, \underline{x} - \underline{\delta})}{\underline{\delta} \in \mathcal{N}}
\end{equation}
where we defined the ``neighbourhood'' $\mathcal{N} := \{\pm1\}^d$.
Similarly, the immediate successors of $(t, \underline{x})$ are the following $2^d$ points:
\begin{equation}
    \left(t+1, \underline{x} + \mathcal{N}\right) = \suchthat{(t-1, \underline{x} + \underline{\delta})}{\underline{\delta} \in \mathcal{N}}
\end{equation}

\subsubsection{Foliation and category of slices}

The causal order $\Omega$ admits a foliation $\mathcal{F}$ where each slice is a constant-time Cauchy slice $\Sigma_t$ for some $t \in \integers$:
\begin{equation}
    \Sigma_t := \suchthat{(t, \underline{x})}{\underline{x} \in (t,...,t) + 2\integers^{d}}
\end{equation}
A suitable category of slices $\mathcal{C}$ to associate to this foliation is given by taking as slices all the finite sets $\Sigma_{t, \mathcal{X}} \subset \Sigma_t$ of events having the same time coordinate $t$:
\begin{equation}
    \Sigma_{t, \mathcal{X}} = \suchthat{(t, \underline{x})}{\underline{x} \in \mathcal{X}}
\end{equation}
where  $\mathcal{X} \subset (t,...,t) + 2\integers^{d}$ is some finite subset.
The morphisms $\slicePreceq$ of $\mathcal{C}$ are given as follows for $k \geq 0$:
\begin{equation}
    \label{equation:partitioned-cca-slicepreceq}
    \Sigma_{t, \mathcal{X}} \slicePreceq \Sigma_{t+k, \mathcal{Y}}
    \hspace{5mm} \text{ if and only if } \hspace{5mm}
    \bigcup_{\underline{y} \in \mathcal{Y}} \left(\left(t, \underline{y} + \mathcal{N}^{(k)}\right)\right)
    \subseteq \mathcal{X}
\end{equation}
where the ``iterated neighbourhood'' $\mathcal{N}^{(k)}$ is defined as $\mathcal{N} + ... + \mathcal{N}$ by adding together $k \geq 0$ copies of $\mathcal{N}$ (and we set $\mathcal{N}^{(0)} := \{0\}$). Explicitly we have:
\begin{equation}
    \mathcal{N}^{(k)}
    :=
    \begin{cases}
        \{-k, -k+2, ...-1,+1,...,k-2,k\} & \text{ if $k$ odd}\\
        \{-k, -k+2, ...-2, 0, +2,...,k-2,k\} & \text{ if $k$ even}
    \end{cases}
\end{equation}
It is easy to check (by a $t \mapsto -t$ symmetry argument) that $\mathcal{C}$ is reversible.

\subsubsection{Symmetry}

The category $\mathcal{C}$ admits a symmetry action of the group $G := \integers^{\mathcal{N}} \cong \integers^{2^d}$. We index the coordinates of vectors in $\integers^{\mathcal{N}}$ by the $2^d$ points $\underline{\delta} \in \mathcal{N} = \{\pm1\}^d$. We denote by $\tau_{\underline{\delta}}$ the vector in $\integers^{\mathcal{N}}$ which is $1$ at the coordinate labelled by $\underline{\delta}$ and $0$ at all other coordinates. The action is then specified by setting:
\begin{equation}
    \tau_{\underline{\delta}}(t, \underline{x}) := \left(t+1, \underline{x} -\underline{\delta}\right)
\end{equation}
That is, the $2^d$ generators of $\integers^{\mathcal{N}}$ send a generic event $(t, \underline{x})$ to each of its $2^d$ immediate causal successors in $\Omega$, one for each possible choice of sign $\pm 1$ along each of the $d$ directions of the space lattice $\integers^d$.
\footnote{The reason for the negative sign in $\underline{x} -\underline{\delta}$ is that $\mathcal{N}$ was originally defined to be the neighbourhood \emph{in the past}.}
Each generator $\tau_{\underline{\delta}}$ for the symmetry action sends a Cauchy slice $\Sigma_t$ in the foliation to the next Cauchy slice $\Sigma_{t+1}$, so the action of $G$ on the foliation is transitive.

\subsubsection{Causal field theory - field over slices}

As our field category we consider a generic causal process theory $\mathcal{D}$, i.e. a symmetric monoidal category equipped with a family of \emph{discarding maps} $\trace{\mathcal{H}}: \mathcal{H} \rightarrow I$ for all objects $\mathcal{H} \in \obj{\mathcal{D}}$, respecting the tensor product $\otimes$ and tensor unit $I$ of $\mathcal{D}$: $\trace{\mathcal{H} \otimes \mathcal{K}} = \trace{\mathcal{H}} \otimes \trace{\mathcal{K}}$ and $\trace{I} = 1$.
Discarding maps generalise the partial trace of quantum theory: normalised states $\rho: I \rightarrow \mathcal{H}$---generalising density matrices---are defined to be those such that $\trace{\mathcal{H}} \circ \rho = 1$ and normalised morphisms $U: \mathcal{H} \rightarrow \mathcal{K}$---generalising CPTP maps---are defined to be those such that $\trace{\mathcal{K}} \circ U = \trace{\mathcal{H}}$.
See e.g. \cite{gogioso2017categorical,coecke2013causal,coecke2017picturing} for more information.

To create a $G$-invariant causal field theory $\Psi$, we consider some object $\mathcal{H} \in \obj{\mathcal{D}}$ together with some endomorphism $U: \mathcal{H}^{\otimes 2^d} \rightarrow \mathcal{H}^{\otimes 2^d}$, which we will refer to as the \emph{scattering map}.
For reasons that will soon become clear, it is more convenient to index the factors of $\mathcal{H}^{\otimes 2^d}$ by the $2^d$ points in the neighbourhood $\mathcal{N}$, hence writing $U: \mathcal{H}^{\otimes\mathcal{N}} \rightarrow \mathcal{H}^{\otimes\mathcal{N}}$.

We define the action of $\Psi$ on the slices in $\mathcal{C}$ as follows:
\begin{equation}
    \Psi(\Sigma_{t, \mathcal{X}}) := \left(\mathcal{H}^{\otimes\mathcal{N}}\right)^{\otimes \mathcal{X}} = \mathcal{H}^{\otimes(\mathcal{N} \times \mathcal{X})}
\end{equation}
The tensor product is well-defined in all symmetric monoidal categories, since $\mathcal{X}$ is always finite.
Physically, the field takes values in a copy of $\mathcal{H}^{\otimes\mathcal{N}}$ over each event $(t, \underline{x})$ of spacetime, each individual $\mathcal{H}$ factor of $\mathcal{H}^{\otimes\mathcal{N}}$ encoding the contribution to the field state at $(t, \underline{x})$ from the field state at each of its immediate causal predecessors in $(t-1, \underline{x} + \mathcal{N})$.

\subsubsection{Causal field theory - restriction and evolution}

From their definition in Equation~\ref{equation:partitioned-cca-slicepreceq}, it is easy to see that morphisms $\Sigma_{t, \mathcal{X}_0} \slicePreceq \Sigma_{t+k, \mathcal{X}_k}$ on $\mathcal{C}$ can always be factored in the following way:
\begin{equation}
    \Sigma_{t, \mathcal{X}_0}
    \slicePreceq
    \Sigma_{t, \mathcal{Y}_0}
    \slicePreceq
    \Sigma_{t+1, \mathcal{X}_1}
    \slicePreceq
    \Sigma_{t+1, \mathcal{Y}_1}
    \slicePreceq
    ...
    \slicePreceq
    \Sigma_{t+k, \mathcal{X}_k}
\end{equation}
where $\mathcal{Y}_i \subseteq \mathcal{X}_i$ for all $i = 0,...,k-1$ and the following holds for each $i = 1, ..., k$:
\begin{equation}
    \mathcal{Y}_{i-1}
    =
    \bigcup_{\underline{x} \in \mathcal{X}_i} \suchthat{\left(t+i-1, \underline{x} + \underline{\delta}\right)}{\underline{\delta} \in \mathcal{N}}
\end{equation}
This means that we only need to care about the action of $\Psi$ on two kinds of morphisms:
\begin{itemize}
    \item the \emph{restrictions} $\Sigma_{t, \mathcal{X}} \slicePreceq \Sigma_{t, \mathcal{Y}}$, where $\mathcal{Y} \subseteq \mathcal{X}$;
    \item the \emph{1-step evolutions} $\Sigma_{t, \mathcal{Y}} \slicePreceq \Sigma_{t+1, \mathcal{X}}$, where $\mathcal{Y} = \bigcup_{\underline{x} \in \mathcal{X}} \suchthat{\left(t, \underline{x} + \underline{\delta}\right)}{\underline{\delta} \in \mathcal{N}}$.
\end{itemize}
The existence of the factorisation above can be proven by induction, observing that any morphism $\Sigma_{t, \mathcal{X}_0} \slicePreceq \Sigma_{t+1, \mathcal{X}_1}$ factors into the product:
\begin{equation}
    \left(\Sigma_{t, \mathcal{Y}_0} \slicePreceq \Sigma_{t+1, \mathcal{X}_1}\right)
    \otimes
    \left(\Sigma_{t, \mathcal{X}_0 \backslash \mathcal{Y}_0} \slicePreceq \emptyset\right)
\end{equation}
where $\mathcal{Y}_0$ is defined as before so that $\Sigma_{t, \mathcal{Y}_0}$ is exactly the set of immediate causal predecessors of the codomain $\Sigma_{t+1, \mathcal{X}_1}$.

On restrictions $\Sigma_{t, \mathcal{X}} \slicePreceq \Sigma_{t, \mathcal{Y}}$, where $\mathcal{Y} \subseteq \mathcal{X}$, the functor $\Psi$ is defined to act by marginalisation, discarding the field state over all those events in the larger slice $\Sigma_{t, \mathcal{X}}$ which don't belong to the smaller slice $\Sigma_{t, \mathcal{Y}}$:
\begin{equation}
    \label{equation:partitioned-cca-restriction}
    \Psi(\Sigma_{t, \mathcal{X}} \slicePreceq \Sigma_{t, \mathcal{Y}})
    :=
    \bigotimes_{\underline{x} \in \mathcal{X}} F_{\underline{x}}
    \hspace{3mm}\text{ where }\hspace{3mm}
    F_{\underline{x}} :=
    \begin{cases}
        \id{\mathcal{H}^{\otimes \mathcal{N}}} &\text{ if } \underline{x} \in \mathcal{Y}\\
        \trace{\mathcal{H}^{\otimes \mathcal{N}}} &\text{ if } \underline{x} \notin \mathcal{Y}
    \end{cases}
\end{equation}

On 1-step evolutions $\Sigma_{t, \mathcal{Y}} \slicePreceq \Sigma_{t+1, \mathcal{X}}$, where $\mathcal{Y} = \bigcup_{\underline{x} \in \mathcal{X}} \suchthat{\left(t, \underline{x} + \underline{\delta}\right)}{\underline{\delta} \in \mathcal{N}}$, the functor $\Psi$ is defined to act by a combination of evolution by $U$ and marginalisation.
The evolution component is simply an application of $U$ to the state at each event of $\mathcal{Y}$:
\begin{equation}
    \label{equation:partitioned-cca-evolution}
    U^{\otimes\mathcal{Y}}:
    \mathcal{H}^{\otimes(\mathcal{N} \times \mathcal{Y})}
    \rightarrow
    \mathcal{H}^{\otimes(\mathcal{N} \times \mathcal{Y})}
\end{equation}
The marginalisation component then needs to go from the codomain $\mathcal{H}^{\otimes(\mathcal{N} \times \mathcal{Y})}$ of the map above to the desired codomain $\mathcal{H}^{\otimes(\mathcal{N} \times \mathcal{X})}$.
To do this, we recall that the $\mathcal{H}$ factor of $\mathcal{H}^{\otimes(\mathcal{N} \times \mathcal{X})}$ corresponding to a given $\underline{\delta} \in \mathcal{N}$ and a given $\underline{x} \in \mathcal{X}$ is intended to encode the component of the state at $(t+1, \underline{x})$ coming from $(t, \underline{x} + \underline{\delta})$.
Analogously, the $\mathcal{H}$ factor of $\mathcal{H}^{\otimes(\mathcal{N} \times \mathcal{Y})}$ corresponding to a given $\underline{\delta} \in \mathcal{N}$ and a given $\underline{y} \in \mathcal{Y}$ is intended to encode the component of the evolved state going to $(t+1, \underline{y}-\underline{\delta})$.
Hence to go from $\mathcal{H}^{\otimes(\mathcal{N} \times \mathcal{Y})}$ to $\mathcal{H}^{\otimes(\mathcal{N} \times \mathcal{X})}$ we need to discard all factors in $\mathcal{H}^{\otimes(\mathcal{N} \times \mathcal{Y})}$ corresponding to components of the evolved state which are not going to some $(\underline{y} - \underline{\delta}) \in \mathcal{X}$:
\begin{equation}
    \label{equation:partitioned-cca-evolution-discarding}
    \left(
    \bigotimes_{(\underline{\delta}, \underline{y}) \in \mathcal{N} \times \mathcal{Y}}
    \hspace{-3mm}
    F_{\underline{\delta}, \underline{y}}
    \right)
    : \mathcal{H}^{\otimes(\mathcal{N} \times \mathcal{Y})} \rightarrow \mathcal{H}^{\otimes(\mathcal{N} \times \mathcal{X})}
    \hspace{3mm}\text{ where }\hspace{3mm}
    F_{\underline{\delta}, \underline{y}} :=
    \begin{cases}
        \id{\mathcal{H}} &\text{ if } (\underline{y} - \underline{\delta}) \in \mathcal{X}\\
        \trace{\mathcal{H}} &\text{ if } (\underline{y} - \underline{\delta}) \notin \mathcal{X}
    \end{cases}
\end{equation}
Putting the evolution and marginalisation components together we get the action of $\Psi$ on 1-step evolutions:
\begin{equation}
    \label{equation:partitioned-cca-evolution-full}
    \Psi\left(\Sigma_{t,  \mathcal{Y}} \slicePreceq \Sigma_{t+1, \mathcal{X}}\right)
    :=
    \left(
    \left(\bigotimes_{(\underline{\delta}, \underline{y}) \in \mathcal{N} \times \mathcal{Y}}
    \hspace{-3mm}
    F_{\underline{\delta}, \underline{y}}\right)
    \circ
    U^{\otimes\mathcal{Y}}
    \right)
    : \mathcal{H}^{\otimes(\mathcal{N} \times \mathcal{Y})}
    \rightarrow
    \mathcal{H}^{\otimes(\mathcal{N} \times \mathcal{X})}
\end{equation}
By construction, the above is a $G$-invariant causal field theory, completing the definition of our partitioned causal cellular automaton. If $U$ is an isomorphism, the same construction on $\mathcal{C}^{rev}$ using $U^{-1}$ provides a causal reversal for $\Psi$, showing that the partitioned causal cellular automata above is reversible under those circumstances. Finally, Figure~\ref{figure:partitioned-cca} below depicts an example of action on morphisms for a $(1+1)$-dimensional partitioned causal cellular automaton.

\begin{figure}[h]
    \begin{center}
        \includegraphics[width=0.95\textwidth]{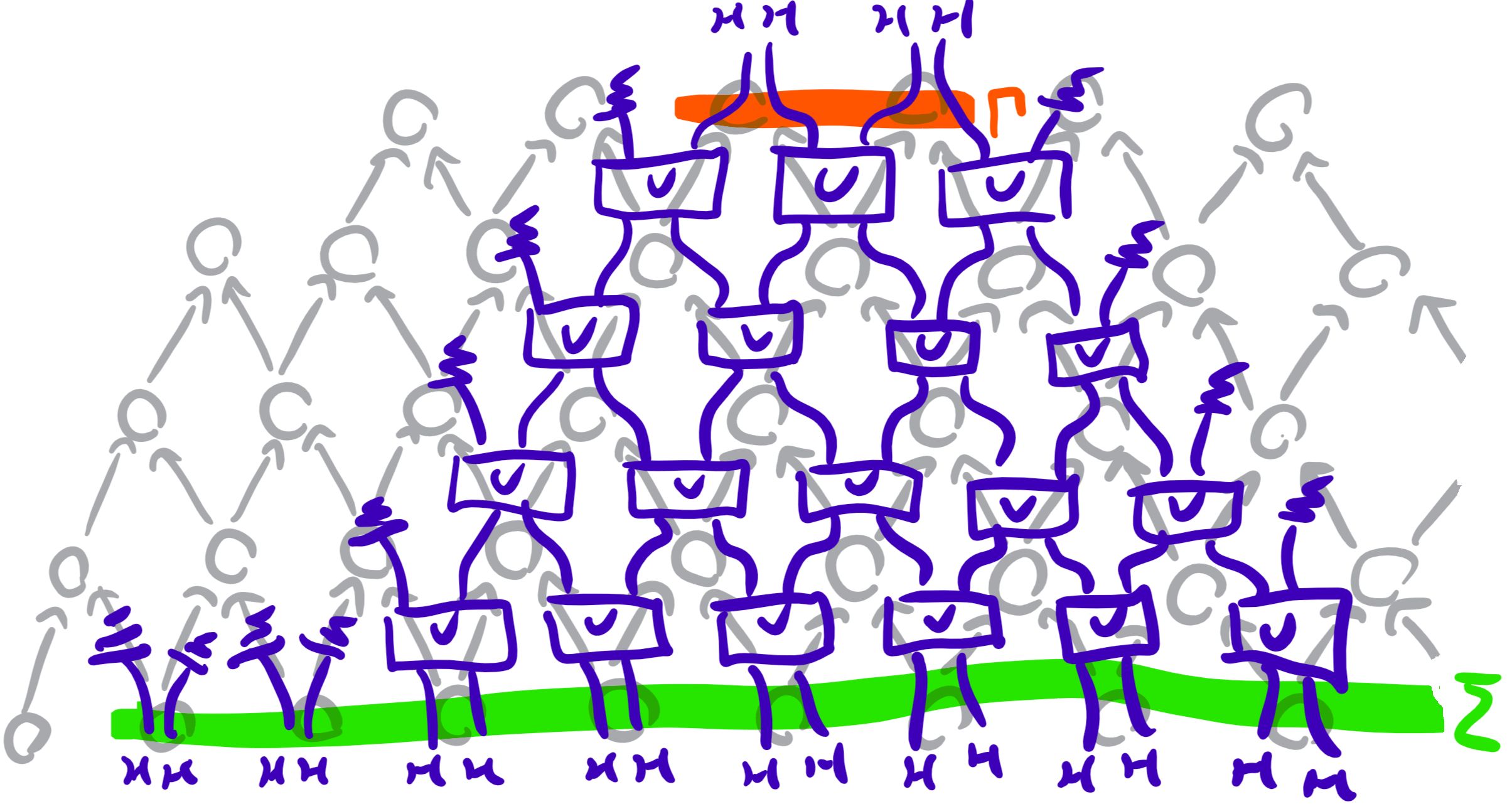}
    \end{center}
    \caption{
        Action of a partitioned causal cellular automaton over a complicated morphism $\Sigma \slicePreceq \Gamma$ in the $(1+1)$-dimensional example of the diamond lattice.
        Here $\mathcal{N} = \{\pm 1\}$, so each event in the causal order is associated to a copy of $\mathcal{H}^{\otimes \mathcal{N}} \cong \mathcal{H} \otimes \mathcal{H}$.
        The restriction action of the CCA (Equation~\ref{equation:partitioned-cca-restriction}) can be seen on the two events at the bottom left.
        The pure evolution action of the CCA (Equation~\ref{equation:partitioned-cca-evolution}) can be seen on the central pyramid of ten events, as the application of $U$ without discarding.
        The evolution + marginalisation action of the CCA (Equation~\ref{equation:partitioned-cca-evolution-full}) can be seen on the eight events at the sides of the central pyramid, as the application of $U$ followed by discarding of one of the two outputs.
        The input of the morphism depicted consists of eight copies of $\mathcal{H} \otimes \mathcal{H}$, one for each event of $\Sigma$, while the output of the morphism depicted consists of two copies of $\mathcal{H} \otimes \mathcal{H}$, one for each event of $\Gamma$.
    }
    \label{figure:partitioned-cca}
\end{figure}

\subsection{Sketch of the continuous limit for the Dirac QCA}

To conclude, we note how in \cite{arrighi2019automata} it is argued that the Dirac equation for free propagation of an electron can be recovered in the continuous limit of a specific $(1+1)$-dimensional partitioned QCA.
The original argument could not be made fully rigorous, because the QCAs defined therein were discrete and no setting was available to the author in which to make proper sense of the infinite tensor product arising from the limiting construction.
Our definition of CCA, on the other hand, has no requirement of discreteness.
Furthermore, the freedom left in the choice of field category for a CCA allows us to benefit from the full power of the non-standard approach to categorical quantum mechanics \cite{gogioso2017towards,gogioso2018quantum}.
As a consequence, we are able to sketch below a formalization in our framework of the continuous limit for the Dirac QCA, following the same lines as the construction of a $(1+d)$-dimensional partitioned CCA above.

The key to obtain a continuous limit for the Dirac QCA is to rescale the discrete lattice $\Omega$ to one with infinitesimal mesh $\varepsilon$:
\begin{equation}
    \varepsilon\Omega := \suchthat{\varepsilon(t, \underline{x})}{t \in \!\!\starIntegers, \underline{x} \in (t,...,t) + 2\starIntegers^{d}}
\end{equation}
where $\starIntegers$ are the non-standard integer numbers.
The slices are now allowed to contain an infinite number of points and can be used to approximate all equal-time partial Cauchy hyper-surfaces in $(1+d)$-dimensional Minkowski spacetime.
Unfortunately, the infinite number of points in our slices now requires infinite tensor products to be taken: to deal with this, we use as our field category the dagger compact category $\starHilbCat$ of non-standard hyperfinite-dimensional Hilbert spaces, where such infinite products can be handled safely.

We set the scattering map to be the following non-standard unitary
\begin{equation}
    U = 1 \oplus \sigma_X \exp(-im\varepsilon \sigma_X)\oplus 1
\end{equation}
where $\sigma_X$ is the $X$ Pauli matrix: this is the same unitary used in the Dirac QCA, but with the real parameter $\varepsilon$ turned into an infinitesimal.
Each application of $U$ only inches infinitesimally further from the identity, but in the non-standard setting we are allowed to consider the cumulative effect across infinite sequences of infinitesimally close slices.
The first order approximations to the Dirac equation derived in \cite{arrighi2019automata} turn into legitimate infinitesimal differentials, connecting the state on each slice to the state on the (infinitesimally close) next slice:
once the standard part is taken, the lattice $\varepsilon\Omega$ ends up covering the entirety of $(1+d)$-dimensional Minkowski spacetime, the differentials get integrated and $\Psi$ turns into a continuous-time field evolution following the Dirac equation.

\section{Conclusions and Future work}

In this work, we have defined a functorial, theory-independent notion of \emph{causal field theory} founded solely on the order-theoretic structure of causality.
We have seen how the causality requirement for such field theories is automatically satisfied as a consequence of symmetry-breaking in the ordering on space-like slices.
In an effort to connect to Algebraic Quantum Field Theory (AQFT), we have constructed complex spaces of states over regions of spacetime and discussed how the associated information redundancy can be reduced in selected cases.
We have introduced symmetries in our framework and shown that Quantum Cellular Automata (QCA) can be modelled within it, both in their traditional discrete formulation and in their continuous limit.

Despite our efforts, we feel we have barely scratched the surface on the potential of this material.
In the future, we envisage three lines of research stemming from this work. Firstly, we believe that the connection with AQFT can be strengthened and honed to the point that the framework will be a tool for the construction of new models.
This includes a thorough understanding of the structure of spaces of states for categories of slices more general than those induced by foliations.
Secondly, we wish to further explore and fully characterise the possibilities associated with working in the continuous limit of QCAs, with an eye to applications in perturbative quantum field theory.
Finally, we plan to extend the framework in a number of directions, including indefinite causal order, enrichment and the possibility of working with restricted classes of causal paths (in temporal analogy to categories of slices).

\section{Conflict of Interest Statement}

The authors declare that the research was conducted in the absence of any commercial or financial relationships that could be construed as a potential conflict of interest.

\section{Funding}

This work is supported by a grant form the John Templeton Foundation. The opinions expressed in this publication are those of the authors and do not necessarily reflect on the views of the John Templeton Foundation.
MS is supported by an EPSRC grant, ref. EP/P510270/1.

\bibliographystyle{plainurl}
\bibliography{bibliog}

\end{document}